\newtheorem{theorem}{Theorem}[section]
\newtheorem{lemma}[theorem]{Lemma}
\newtheorem{definition}[theorem]{definition}
\title{Slim-ABC: An Optimized Atomic Broadcast Protocol}
\author{ {Nasit S Sony} \\
	University of California, Merced\\
	CA 95340, USA \\
	\texttt{nsony@ucmerced.edu} \\
	%% examples of more authors
	\And
	{Xianzhong Ding} \\
	Lawrence Berkeley National Laboratory\\
	CA 94720, USA \\
	\texttt{dingxianzhong@lbl.gov} \\
        \And
	{Mukesh Singhal} \\
	University of California, Merced\\
	CA 95340, USA \\
	\texttt{msinghal@ucmerced.edu} \\
	%% \AND
	%% Coauthor \\
	%% Affiliation \\
	%% Address \\
	%% \texttt{email} \\
	%% \And
	%% Coauthor \\
	%% Affiliation \\
	%% Address \\
	%% \texttt{email} \\
	%% \And
	%% Coauthor \\
	%% Affiliation \\
	%% Address \\
	%% \texttt{email} \\
}
\begin{document}
\maketitle

\begin{abstract}
The Byzantine Agreement (BA) problem is a fundamental challenge in distributed systems, focusing on achieving reaching an agreement among parties, some of which may behave maliciously. With the rise of cryptocurrencies, there has been significant interest in developing atomic broadcast protocols, which facilitate agreement on a subset of parties' requests. However, these protocols often come with high communication complexity ($O(ln^2 + \lambda n^3 \log n)$, where $l$ is the bit length of the input, $n$ is the number of parties, and $\lambda$ represents the security parameter bit length). This can lead to inefficiency, especially when the requests across parties exhibit little variation, resulting in unnecessary resource consumption. In this paper, we introduce Slim-ABC, a novel atomic broadcast protocol that eliminates the $O(ln^2 + \lambda n^3 \log n)$ term associated with traditional atomic broadcast protocols. While Slim-ABC reduces the number of accepted requests, it significantly mitigates resource wastage, making it more efficient. The protocol leverages the asynchronous common subset and provable-broadcast mechanisms to achieve a communication complexity of $O(ln^2 + \lambda n^2)$. Despite the trade-off in accepted requests, Slim-ABC maintains robust security by allowing only a fraction ($f+1$) of parties to broadcast requests. We present an extensive efficiency analysis of Slim-ABC, evaluating its performance across key metrics such as message complexity, communication complexity, and time complexity. Additionally, we provide a rigorous security analysis, demonstrating that Slim-ABC satisfies the \textit{agreement}, \textit{validity}, and \textit{totality} properties of the asynchronous common subset protocol.

\end{abstract}

% keywords can be removed
\keywords{ Blockchain, Distributed Systems, Byzantine Agreement, System Security}

\section{Introduction}\label{sec1}

The Byzantine Agreement (BA) problem is fundamental in distributed systems where multiple computers (parties) must agree on a common value, even if some parties act maliciously or unpredictably \cite{BYZ22, BYZ23}. Achieving agreement in such scenarios is crucial for the reliability and security of distributed systems, especially in asynchronous networks where message delivery times are unpredictable. Traditional BA protocols in synchronous and partially synchronous networks often rely on leader-based approaches, which can suffer from high communication complexity and delays, becoming single points of failure if Byzantine. This issue is more pronounced in large-scale systems like blockchain technologies, where decentralized agreement is essential \cite{BITCOIN01}. To address these issues, asynchronous Byzantine agreement (ABA) protocols are needed. Fischer, Lynch, and Paterson \cite{CONS03} proved that BA protocols do not terminate in asynchronous settings with even one non-Byzantine failure. Ben-Or \cite{BYZ11} showed that introducing randomness allows these protocols to terminate with high probability. Cachin et al. \cite{SECURE02} introduced the ABA, which serves as the foundation for the MVBA protocol \cite{SECURE02}. MVBA allows each party to input a value, with the protocol outputting one party's input, validated by a predefined predicate, using threshold-signature and coin-tossing schemes \cite{THRESH01, BORN01}. However, the protocol has high communication complexity which is $O(ln^2+ \lambda n^2+n^3)$. Recent work by Abraham et al. \cite{BYZ17} and Dumbo-MVBA \cite{BYZ20} reduces this to $O(ln^2+ \lambda n^2)$ using erasure codes. %\nasit{Do we really need to address this?}

Recent efforts belong to the atomic broadcast protocols \cite{HONEYBADGER01, FASTERDUMBO, SPEEDINGDUMBO}, which are built from the asynchronous common subset (ACS) protocol. The ACS protocol is a BA that outputs a subset containing \textit{n-f} input values. The communication complexity of these protocols is $O(ln^2 + \lambda n^3 \log n)$. However, analyzing the protocols reveals that even with threshold encryption, if parties input the same transaction, the outcome resembles the MVBA protocol. We simulated different scenarios to observe the behavior of the protocols and found that honest parties, despite proposing varied requests, might still broadcast the same ones due to differing client request orderings and lack of knowledge of others' requests until an agreement is reached. Consequently, agreeing on a subset of requests does not necessarily improve the total number of accepted requests. 

% \nasit{Here I provide simulated result reference what I mentioned above. We will delete it if we find it is unnecessary. I saw in other papers to show motivating examples} Figure \ref{fig:Motivationfig1} represents us an ideal scenario where parties agree on $n-f$ parties output. On the other hand, though in Figure \ref{fig:Motivationfig1},and  \ref{fig:Motivationfig2} parties reach an agreement in $n-f$ parties' input, the total number of accepted request is one. 

%In Appendix \ref{App:Motivation Figure}, Figure \ref{fig:Motivationfig1}, \ref{fig:Motivationfig2}, \ref{fig:Motivationfig3} and \ref{fig:Motivationfig4} show us different scenarios.

The above challenges highlight the need for a protocol that has a low communication cost and can output a set of parties' input. The low communication cost can mitigate the effect of having duplicate requests. To find a protocol with low communication cost, we analyze the existing atomic broadcast protocols and find out the key factors that contribute to high communication cost. HoneyBadgerBFT \cite{HONEYBADGER01} provides the first practical Byzantine fault tolerant (BFT) atomic broadcast protocol, and the high communication cost ($O(\lambda n^3 logn)$) of the protocol comes from the use of reliable broadcast protocol (RBC). The RBC protocol ensures the reliability of the message delivery. Fasterdumbo \cite{FASTERDUMBO} also utilizes the RBC protocol for reliable delivery. On the other hand, SpeedingDumbo\cite{SPEEDINGDUMBO} uses a tighter version of RBC, the Provable-Broadcast from Abraham et al. \cite{BYZ17}. The Provable-Broadcast (PB) is an instantiation of verifiable consistent broadcast (VCB) from Cachin et el. \cite{SECURE03}. The PB protocol does not provide a reliable property; therefore, SpeedingDumbo uses a message dissemination and recovery method to recover the message, and it leads the communication cost to $O(\lambda n^3 logn)$. Our main observation is that reducing the number of proposals leads to a more efficient protocol where the probability is high that parties may have duplicate requests. We leverage this reduction technique to design an atomic broadcast protocol. In this protocol, parties agree on a small number of parties' requests ($1 \leq q \leq f+1$), reducing communication complexity to $O(n^2(l+\lambda))$. We randomly select $f+1$ parties to broadcast their requests/proposals, ensuring at least one honest party is included, with an average of two-thirds of the selected parties being honest. If parties agree on one party's request, the communication cost is lower regardless of request variation among selected parties (see Figure \ref{fig:Result1}). If parties agree on $q$ proposals with non-varying requests, the protocol maintains low communication costs (see Figure \ref{fig:Result2}). If requests vary among the $q$ parties, the protocol benefits from both reduced communication costs and an increased number of accepted requests (see Figure \ref{fig:Result3}).

 %We propose Slim-ABC, an atomic broadcast protocol designed to reduce the number of required broadcasts, thereby significantly lowering the message complexity compared to traditional approaches. 
   We propose Slim-ABC, an atomic broadcast protocol designed to have a message and communication complexity like an MVBA protocol, compared to traditional approaches. Slim-ABC leverages a committee selection, prioritized provable-broadcast (pPB) mechanism to reduce the communication complexity to $O(ln^2 + \lambda n^2)$. The primary challenge was to design a protocol that could efficiently output a set of parties' input requests while maintaining low communication costs. We solved this by allowing only a fraction ($f+1$) of parties to broadcast their proposals and by using a threshold encryption scheme to ensure security \cite{HONEYBADGER01}. The second challenge is how to distribute the ($f+1$) parties' proposals among the parties thus they can reach an agreement. To address this challenge, we introduce a new step \textit{suggest} like Sony et al. \cite{PMVBA}. The \textit{suggest} step disseminates the provable-broadcast obtained from pPB in a way that ensures the proposals of the committee members are received by a threshold number of parties. Thus, the parties can reach an agreement on the set of parties' proposals.

%At the core of our design is the pPB protocol. This protocol generates proof of broadcast only for selected parties, ensuring that only relevant messages are disseminated efficiently. This selective broadcast mechanism helps to significantly reduce the overall communication complexity. To validate the effectiveness of our proposed protocols, we conducted extensive analysis based on several key metrics: Message Complexity, Communication Complexity, and Time Complexity. Our security and efficiency analysis demonstrate that Slim-ABC significantly reduces communication and message complexities compared to existing atomic broadcast protocols while also maintaining robust security properties.

At the core of our design is the introduction of a committee from the parties and letting only these parties broadcast their requests. The approach generates proof of broadcast only for selected parties, ensuring that only relevant messages are disseminated efficiently. This selective broadcast mechanism helps to significantly reduce the overall communication complexity. To validate the effectiveness of our proposed protocols, we conducted extensive analysis based on several key metrics: Message Complexity, Communication Complexity, and Time Complexity. Our security and efficiency analysis demonstrate that Slim-ABC significantly reduces communication and message complexities compared to existing atomic broadcast protocols while also maintaining robust security properties.

We summarize our contributions as follows:
\begin{itemize}
    \item \textbf{Slim-ABC Protocol}: We present an atomic broadcast protocol that reduces communication costs by leveraging a committee selection mechanism at the time of proposal broadcast, achieving a communication complexity $O(ln^2 + \lambda n^2)$. The protocol is more efficient when the parties are prone to have duplicate requests. The efficiency is achieved by allowing only a fraction ($f+1$) of parties to broadcast proposals, supported by a threshold encryption scheme for security.
    \item \textbf{Message distribution}: At the core of our protocol, we find and implement a message distribution pattern that efficiently distributes the messages, which significantly reduces the overall communication complexity compared to existing atomic broadcast protocols.
    \item \textbf{Extensive analysis}: We validate our protocols through extensive security and efficiency analysis, demonstrating substantial reductions in communication and message complexities while maintaining robust security properties compared to existing atomic broadcast protocols.
\end{itemize}

The remainder of this paper is organized as follows. Section \ref{Preliminaries} presents the preliminaries, outlining the key concepts and protocols used as foundations for this research. Section \ref{slim-abc} introduces the design of Slim-ABC, detailing each component of the protocol, including committee selection, prioritized provable-broadcast, suggestion, and ABBA-Invocation. In Section \ref{evaluation}, we provide a thorough security and efficiency analysis, demonstrating how Slim-ABC satisfies Byzantine Agreement properties while achieving significant communication and message complexity reductions. We also offer an evaluation of Slim-ABC compared to existing atomic broadcast protocols, focusing on key metrics such as message complexity, communication complexity, and time complexity. Finally, Section \ref{conclusion} concludes the paper by summarizing the contributions and suggesting directions for future work.

\section{Preliminaries}\label{Preliminaries}

\subsection{Definitions and Assumptions}

\subsubsection{Provable-Broadcast} Provable-Broadcast for the selected parties ensures the following properties with negligible probability:
\begin{itemize}
    \item \textbf{PB-Integrity:} An honest party delivers a message at most once.
    \item \textbf{PB-Validity:} If an honest party $p_i$ delivers $m$, then $EX-PB-VAL_i\langle \text{id}, m \rangle = \text{true}$.
    \item \textbf{PB-Abandon-ability:} An honest party does not deliver any message after it invokes PB-abandon(ID).
    \item \textbf{PB-Provability:} For two values $v$, $v'$, if a sender can produce two threshold-signatures $\sigma$, $\sigma'$ such that $\text{threshold-validate}(\langle \text{id}, v \rangle, \sigma) = \text{true}$, then $\text{threshold-validate}(\langle \text{id}, v' \rangle, \sigma') = \text{true}$. This implies that $v = v'$ and at least $f+1$ honest parties delivered a message $m$ such that $m.v = v$.
    \item \textbf{PB-Termination:} If the sender is honest, no honest party invokes PB-abandon(ID), all messages among honest parties are delivered, and the message $m$ that is being broadcast is externally valid, then (i) all honest parties deliver $m$, and (ii) PB(ID, $m$) returns (to the sender) $\sigma$, which satisfies $\text{threshold-validate}(\langle \text{ID}, m.v \rangle, \sigma) = \text{true}$.
    \item \textbf{PB-Selected:} If an honest party $p_i$ delivers $m$, then $m$ is proposed by a selected party.
\end{itemize}

\subsubsection{Cryptographic Abstractions}
Since we aim to design a distributed algorithm in authenticated settings where we use robust, non-interactive threshold signatures to authenticate messages, a threshold coin-tossing protocol to select parties randomly, and a threshold encryption scheme to encrypt messages \cite{SECURE02, PMVBA}, we introduce each of the schemes here.

\begin{enumerate}
    \item \textbf{Threshold Signature Scheme:}
     We utilize the threshold signature scheme introduced in \cite{THRESH01,SECURE02}. The main idea is that there are \( n \) parties, up to \( f \) of which may be faulty. Each party holds a share of a secret key of a signature scheme and can generate a share of a signature on an individual message. \( t \) signature shares are both necessary and sufficient to construct a threshold signature where \( f < t \leq (n-f) \). The threshold signature scheme also provides a public key \( pk \) along with secret key shares \( sk_1, \ldots, sk_n \), a global verification key \( vk \) to verify the message signed by public key \( pk \), and local verification keys \( vk_1, \ldots, vk_n \). Initially, a party \( p_i \) has information on the public key \( pk \), global verification key \( vk \), a secret key share\( sk_i \), and the verification keys for all the parties' secret keys. We describe the security properties of the scheme and related algorithms in Appendix \ref{TSS}.
    
    \item \textbf{Threshold Coin-Tossing Scheme:}
    In the threshold coin-tossing scheme, introduced in \cite{THRESH01,SECURE02}, each party holds a share of a pseudorandom function \( F \). The pseudorandom function \( F \) maps a coin named \( C \) (an arbitrary bit string). A distributed pseudorandom function is a coin that simultaneously produces \( k'' \) random bits. The name \( C \) (the arbitrary bit string) is necessary and sufficient to construct the value \( F(C) \in \{0,1\}^{k''} \) of the particular coin. The parties may generate shares of a coin — \( t \) coin shares are both necessary and sufficient to toss the coin where \( f < t \leq n-f \), similar to threshold signatures. The generation and verification of coin-shares are also non-interactive. We describe the security properties of the scheme and related algorithms in Appendix \ref{TCT}.

   \item{\textbf{Threshold encryption scheme}} \label{app:TPKE}
    A threshold encryption scheme allows any party to encrypt a message to a given public key such that a threshold number of honest parties are required to participate to decrypt the message. The threshold number is $ f+1 $ ( $ 3f+1$ is the total number of parties), and if these $ f+1 $ number of parties compute and reveal decryption shares for an encrypted message, the message can be recovered. Therefore, the adversary is unable to learn about the message until one honest party reveals its decryption share. A threshold scheme provides the following interface:

    \begin{itemize}
       \item TPKE.Setup($1^K$) $\rightarrow$ PK $\{SK_i\}$ generates a public encryption key PK and the secret keys $\{SK_1,SK_2,...SK_n\}$ for each party.
       \item  TPKE.Enc(PK, m) $\rightarrow$ $C$ encrypts a message $m$.
       \item TPKE.DecShare($, C$)$\rightarrow$ $\sigma_i$ produces the $i^{th}$ share of the decryption (or $\bot$ if $C$ is malformed).
       \item  TPKE.Dec(PK, $C, \{i, \sigma_i\}$) $\rightarrow$ $m$ combines a set of decryption share $\{i, \sigma_i\}$ from at least f+1 parties obtain the plaintext m (or, if $C$ contains invalid shares, then the invalid shares are identified.)
    \end{itemize}

\end{enumerate}

Like HB-BFT \cite{HONEYBADGER01} protocol, we use the same threshold encryption scheme of Baek and Zheng \cite{THRESH05}.

\subsubsection{{$(1, \kappa, \epsilon)$}- Committee Selection} A Committee Selection (CS) protocol is executed among $n$ parties (identified from 1 through $n$). If at least $f+1$ honest parties participate, the protocol terminates with honest parties outputting a $\kappa$-sized committee set $C$ such that at least one of $C$ is an honest party. The detailed properties are provided below.

The protocol satisfies the following properties except with negligible probability in cryptographic security parameter $\kappa$:

\begin{itemize}
    \item \textbf{Termination.} If $\langle f+1 \rangle$ honest parties participate in committee selection and the adversary delivers the messages, then honest parties output a set $C$. 
    \item \textbf{Agreement.} Any two honest parties output the same set $C$. 
    \item \textbf{Validity.} If any honest party outputs set $C$, then (i) $|C| = \kappa$, (ii) The probability of every party $p_i \in C$ is same, and (iii) $C$ contains at least one honest party with probability $1-\epsilon$.
    \item \textbf{Unpredictability.} The probability of the adversary to predict the returned committee before an honest party participates is at most $\frac{1}{^nC_\kappa}$.
\end{itemize}

\subsection{System Model}
We assume an asynchronous message-passing system \cite{BYZ17, FASTERDUMBO, HONEYBADGER01}, which consists of a fixed set of parties ($n$). 

In this subsection, we introduce the computation and communication model the adversarial system uses.

\subsubsection{Computation} The model uses standard modern cryptographic assumptions and definitions from \cite{SECURE02, SECURE03}. We model the system modules’ computations as probabilistic Turing machines and provide infeasible problems to the adversary, making it unable to solve the problem. A problem is defined as infeasible if any polynomial-time probabilistic algorithm solves it only with negligible probability. Since the computation modules are probabilistic Turing machines, the adversary uses a probabilistic polynomial-time algorithm. However, given the definition of an infeasible problem, the probability of solving at least one such problem out of a polynomial in $k$ number of problems is negligible. Therefore, we bound the total number of parties $n$ by a polynomial in $k$.

\subsubsection{Communications} We consider an asynchronous network, where communication is point-to-point, and the medium is reliable and authenticated \cite{BYZ20, BYZ30}. Reliability ensures that if an honest party sends a message to another honest party, the adversary can only determine the delivery time but cannot read, drop, or modify the messages. An authenticated medium ensures that if party $p_i$ receives a message $m$, then party $p_j$ sent the message $m$ before party $p_i$ received it.

\subsection{Design Goal}

%\begin{definition}[Validated Byzantine Agreement]
%A protocol solves the validated Byzantine agreement with the base externally-valid function if it satisfies the following conditions with negligible probability:
%\begin{itemize}
 %   \item \textbf{External Validity:} Any honest party that terminates and decides $v$ such that $\text{externally-valid}\langle v \rangle = \text{true}$.
 %   \item \textbf{Agreement:} If an honest party decides $v$, then all the honest parties that terminate decide $v$.
  %  \item \textbf{Liveness:} If all honest parties participate and all associated messages are delivered, then all honest parties decide.
 %   \item \textbf{Integrity:} If the honest parties follow the protocol and decide on a value $v$, then $v$ is proposed by some party %$\text{valid}\langle v, p_i \rangle = \text{true}$, where $v$ is proposed in the current or a previous view.
  %  \item \textbf{Efficiency:} The number of messages generated by the honest parties is probabilistically uniformly bounded.
%\end{itemize}
%\end{definition}

We aim to design an atomic broadcast protocol named Slim-ABC that reaches an agreement on a subset of parties' requests instead of $n$. To design the Slim-ABC protocol, we utilize a variation of the asynchronous common subset protocol. Here we provide the properties of the atomic broadcast protocol and the validated asynchronous common subset problem. 

\subsubsection{Atomic Broadcast}  An atomic broadcast protocol satisfies the following properties:
\begin{itemize}
    \item \textbf{Agreement:} If an honest party outputs a value $v$, then every honest party outputs $v$.
    \item \textbf{Total Order:} If two honest parties output sequences of values $\langle v_1, v_2, \ldots, v_i \rangle$ and $\langle v_1', v_2', \ldots, v_{i'}' \rangle$, then $v_j = v_j'$ for $j \leq \min(i, i')$.
    \item \textbf{Censorship Resilience:} If a value $v$ is input to $\langle n-f \rangle$ honest parties, then every honest party eventually outputs $v$.
\end{itemize}

%We adopt the same model as HoneyBadgerBFT \cite{HONEYBADGER01} and FasterDumbo \cite{FASTERDUMBO}, but our model assumes atomic broadcast among $\langle f+1 \rangle$ parties against $f$ corrupted parties in an asynchronous network. See Appendix \ref{ABC :ACS} for the conversion of atomic broadcast from ACS.

\subsubsection{Asynchronous Common Subset (ACS)} An ACS protocol ensures that each party outputs a common subset of all the parties' input. Since we allow input only from $ f+1 $ parties, we modify the definition of the classic ACS protocol.
\begin{definition}[Validated Asynchronous Common Subset (VACS)]
A protocol solves the ACS problem with input from a subset of parties if it satisfies the following conditions except with a negligible probability:
\begin{itemize}
    \item \textbf{Agreement:} If an honest party outputs a set $V$, then every honest party outputs the set $V$.
    \item \textbf{Validity:} If an honest party outputs $V$, then $|V| \geq 1$ and $V$ contains the inputs that satisfy the $\text{externally-valid}( v, \sigma ) = \text{true}$ condition.
    \item \textbf{Totality:} If the selected parties have an input, then all the selected parties can produce an output.
\end{itemize}
\end{definition}

HoneyBadgerBFT \cite{HONEYBADGER01} provides a conversion from ACS to atomic broadcast by adding threshold encryption, and FasterDumbo \cite{FASTERDUMBO} also uses the same conversion. Our work follows the same conversion but differs in that we allow a subset of parties to propose their requests and use an external-validity predicate to validate a value. Therefore, our validity property ensures that the output set $V$ contains at least one value that passes the external-validity condition. For the totality property, we show that if one honest party inputs, then every honest party outputs. See Appendix \ref{ABC :ACS} for the conversion of atomic broadcast from ACS.

\section{Design of Slim-ABC} \label{slim-abc}
\begin{figure*}[h]
    \centering
    \includegraphics[width=0.9\textwidth]{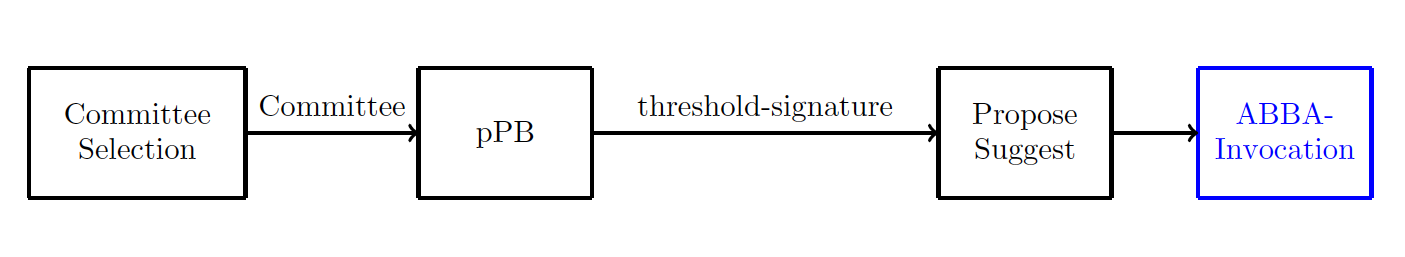}
    \caption{An overview of Slim-ABC.}
    \label{fig:SlimACS}
\end{figure*}

%\subsection{\textbf{High-Level Overview.}}  
%To end up with a completed broadcast, we instantiate a version of the HoneyBadgerBFT (HB-BFT) protocol \cite{HONEYBADGER01}, where honest parties agree on $q$ ($1 \leq q \leq f+1$) number of proposals.

\subsection{Slim-ABC Overview}
This section presents the key components of Slim-ABC. The protocol is composed of four distinct sub-protocols: Committee Selection (CS), Prioritized Provable Broadcast (pPB), Suggestion, and ABBA-Invocation. Honest parties first participate in the Committee Selection process, where a committee of size $f+1$ is formed. Each selected party promotes its request using the pPB protocol, generating a threshold signature as proof of the broadcast. Once a selected party proposes its requests, other parties, upon receiving the proposal, broadcast it as a suggestion. When a party receives a suggestion, it inputs 1 into the corresponding instance of the Asynchronous Binary Byzantine Agreement (ABBA) protocol, referred to as ABBA-Invocation. The black components are our contribution, and the blue ones are adopted from prior work. The framework of the Slim-ABC protocol is depicted in Figure \ref{fig:SlimACS}.

\subsection{\textbf{Committee selection protocol.}} The Committee Selection Protocol is an essential component of the Slim-ABC protocol, designed to reduce communication complexity by selecting a smaller set of parties to broadcast requests rather than involving all $n$ parties. This targeted selection plays a critical role in enhancing efficiency without compromising the security of the protocol. The subset of $f+1$ parties selected at each instance is responsible for performing the agreement task, ensuring the protocol’s progress while maintaining robust security properties. The Committee Selection (CS) protocol is based on a cryptographic coin-tossing scheme, a widely used method in secure distributed systems (e.g., FasterDumbo \cite{FASTERDUMBO}). We follow a similar approach to Sony et al. \cite{PMVBA}, dynamically and randomly selecting $\kappa = f+1$ parties for each instance of the protocol. This guarantees that at least one honest party is included in the committee, and two-thirds of the selected members are expected to be honest. The dynamic selection of the committee also minimizes the risk of adversarial corruption, starvation, and Denial-of-Service (DoS) attacks, ensuring that participation remains fair and secure across all parties.

The CS protocol is illustrated in Algorithm \ref{algo:cs} and involves the following steps:

\begin{itemize}
    \item \textit{ Coin-Share generation:} When SelectCommittee is invoked, a party generates a coin-share $\sigma_i$ for the current instance and broadcasts it to all parties. The party then waits to receive at least $f+1$ coin-shares from other parties (lines 3-5).
    \item \textit{Coin-Share verification:} Upon receiving a coin-share from a party $p_k$ for the first time, the party verifies the authenticity of the coin-share. Valid shares are accumulated in a set $\Sigma$ until $f+1$ valid shares are collected (lines 8-10).
    \item \textit{Committee Selection:} Once a party has received $f+1$ valid coin-shares, it uses the CToss function, which takes the collected coin-shares and the pseudorandom function $F$ as inputs, to randomly select $f+1$ parties to form the committee (lines 6-7).
\end{itemize}
\begin{algorithm}[hbt!]
%\SetAlgoLined
\LinesNumbered
\DontPrintSemicolon
\SetAlgoNoEnd
\SetAlgoNoLine

\SetKwProg{LV}{Local variables initialization:}{}{}
\LV{}{
   $\Sigma \leftarrow \{\}$\;
}

\SetKwProg{un}{upon}{ do}{}
\un{$SelectCommittee ( id, instance) $ invocation}
{
  $\sigma _i$ $\leftarrow$ $CShare_{id}( r_{id} )$ \;
  \textbf{multi-cast} $( SHARE, id, \sigma _i, instance )$\;
  \textbf{wait until} $|\Sigma| = f+1$\;
  
  \KwRet $CToss ( r_{id}, \Sigma)$\;
}
\SetKwProg{un}{upon receiving}{ do}{}
\un{$ ( SHARE, k, \sigma _k, instance)$ from a party $p_{k}$ for the first time}
{
 \uIf{$CShareVerify( r_{k}, \sigma _k) = true$ }{
    $\Sigma \leftarrow {\sigma _k \cup \Sigma}$}
}

\caption{Committee - Selection: Protocol for party  $p_i$}
\label{algo:cs}
\end{algorithm}

\subsection{Prioritized Provable Broadcast (pPB)\label{pPB-SLIM}}

After the Committee Selection protocol designates the committee members, each selected member must provide a verifiable proof of their proposal to ensure that it has been broadcast to at least $f+1$ honest parties. The input for this protocol includes the ID, requests, and the selected parties, while the output is a threshold signature—a verifiable proof that the same request has been sent to at least $f+1$ honest parties. This proof is essential for maintaining the integrity and consistency of the protocol, as it guarantees that the proposal has been correctly disseminated among the parties. Traditionally, the Verifiable Consistent Broadcast (VCBC) protocol is used to generate such proofs, ensuring that each party can provide a verifiable record of their broadcast proposals. Provable-Broadcast is instantiated from the VCBC protocol by Abraham et al. \cite{BYZ17}. However, since Slim-ABC restricts broadcasting to the selected committee members, we employ a slightly modified version of the Provable-Broadcast protocol, which we term pPB (Prioritized Provable Broadcast).

The pPB protocol is designed to work seamlessly with the selective broadcasting approach established by the Committee Selection process. This adaptation ensures that when a party receives a provable proof from a committee member, no additional verification of the sender’s role is required, as the protocol inherently guarantees it. This mechanism simplifies verification, reducing unnecessary checks and preserving the efficiency introduced by the Committee Selection. The construction of the pPB protocol is detailed in Algorithm \ref{algo:pPB}, and its interactions are illustrated in Figure \ref{fig:pPB}, showcasing its key steps and the role it plays in the broader Slim-ABC protocol.

 %The communication complexity of $n$ pPB instances is $O(n^2v)$, where $v < \lambda n^2\log n$ and $\lambda$ is the security parameter. However, the communication complexity of the $n$ RBC instances is $O(n^3v)$. HoneyBadgerBFT \cite{HONEYBADGER01} uses erasure coding to broadcast the requests, and the communication complexity of the $n$ RBC instances turns to $O(n^2|v| + \lambda n^3\log n)$.  However, the provable-broadcast protocol does not provide the totality property, which ensures that all the parties receive a selected party's broadcast. We introduce two new steps to disperse the selected party's broadcast to every party: i) propose and ii) suggest.
%Prioritized Provable-Broadcast for the atomic broadcast protocol can be obtained through the same threshold-signature scheme. Here, an honest party sign-shares on a request only if the party is selected for the particular instance. Therefore, only the selected parties receive $\langle n-f \rangle$ replies with sign-shares and produce a threshold-signature on a message.

%\begin{figure}[h]
%    \centering
%    \includegraphics[width=0.4\textwidth]{images/p-vcb.png}
 %   \caption{pPB illustration. Here the parties $p_1$ and $p_2$ }
  %  \label{fig:p-vcbc}
%\end{figure}

\begin{figure}[ht]
    \centering
    \begin{minipage}[b]{0.47\textwidth}
        \centering
        \includegraphics[width=\textwidth]{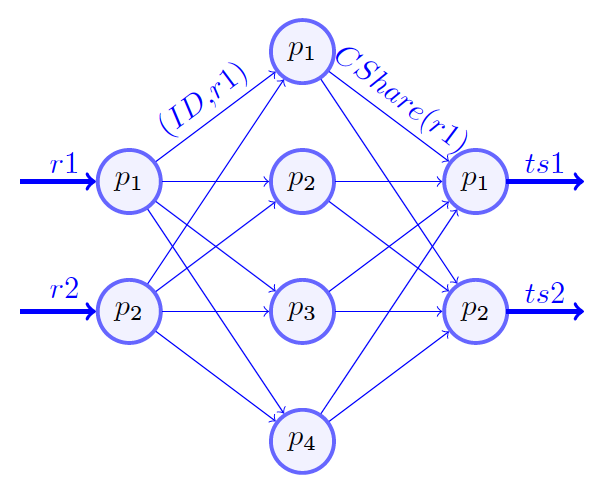}
        \caption{pPB illustration. Here the parties $p_1$ and $p_2$ are the committee members. They first broadcast a message of the form $(ID, r1)$ to every party. When a party receives the message, adds the sign-share $CShare(r1)$ on the message and returns to the sender. A committee member wait for the sign-shares and combines the sign-share to get a threshold-signature ($ts$).}
        \label{fig:pPB}
    \end{minipage}
    \hfill
    \begin{minipage}[b]{0.47\textwidth}
        \centering
        \includegraphics[width=\textwidth]{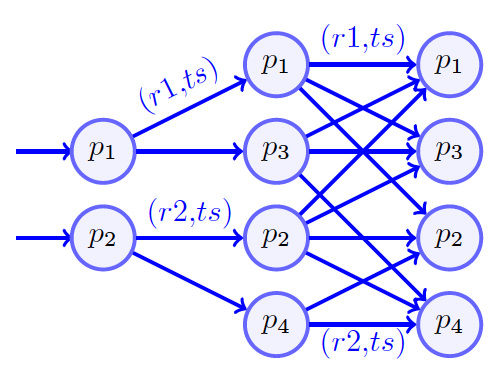}
        \caption{Propose-Suggest illustration. Here the committee members $p_1$ and $p_2$ get their proofs $(ts1, ts2)$ and broadcast that as a proposal to every party. When a party receives a proposal with proof, the party broadcasts the the proposal as a suggestion to every parties (second to third column). A party waits for $2f+1$ suggestions before concluding the steps.}
        \label{fig:PS}
    \end{minipage}
\end{figure}

Here is the construction of the pPB protocol:
\begin{itemize}
    \item Upon the invocation of a $pPB\langle ID, requests, PParties \rangle$ protocol, a party broadcasts the message $(ID, requests)$ to every party. (line 04)
    \item Upon receiving the message $(ID, requests)$ from a party $p_j$ for the first time, a party checks whether the sender is a selected party. If the sender is a selected party, the party adds its sign-share $SigShare$ to the requests and replies to the sender. (lines 12-15)
    \item Upon receiving a sign-share $\sigma_k$ from a party $p_k$, a party verifies the sign-share $\sigma_k$. Then the party adds the sign-share $\sigma_k$ to its set $\Sigma$. (lines 08-10)
    \item A selected party waits for $\langle n-f \rangle$ valid sign-shares. These sign-shares are required to use the $CombineShare$ function to generate a threshold signature. Once the $\langle n-f \rangle$ sign-shares are collected, the threshold signature is returned to the caller. (lines 05-06)
\end{itemize}

\begin{algorithm}[hbt!]
\LinesNumbered
\DontPrintSemicolon
\SetAlgoNoEnd
\SetAlgoNoLine

\SetKwProg{LV}{Local variables initialization:}{}{}
\LV{}{
   $\Sigma \leftarrow \{\}$\;
}

\SetKwProg{un}{upon}{ do}{}
\un{pPB$\langle ID, requests, PParties\rangle$ invocation} 
{
     % $\Sigma \leftarrow \{\}$ \;
   %   $msg \leftarrow createMsg\langle ID, requests\rangle$\;
    %  $msg_{signed} \leftarrow authenticate\langle msg \rangle$\;
     \textbf{multi-cast} $\langle ID, requests \rangle$\;
     \textbf{wait until} $|\Sigma| = n-f$\;
     \KwRet $\rho \leftarrow CombineShare_{id}\langle requests, \Sigma \rangle$\;
  
}
\;

\SetKwProg{un}{upon}{ do}{}
\un{receiving$\langle requests, \sigma_{k}\rangle $ from the party $p_{k}$ for the first time} {
\uIf{$VerifyShare_{k}\langle requests, (k, \sigma_k)  \rangle$ }{
    $\Sigma \leftarrow {\sigma_k \cup \Sigma}$
 }
}
\;
\un{receiving $\langle ID, requests\rangle $ from the party $p_{j}$ for the first time} {
\uIf{$p_j \in PParties$ }{
%   $requests \leftarrow msg.requests$\;
   $\sigma_{id} \leftarrow SigShare_{id} \langle sk_{id}, requests \rangle$\;
   $reply \langle requests, \sigma_{id} \rangle$\;
 }
}

\caption{pPB: Protocol for party $p_i$}
\label{algo:pPB}
\end{algorithm}
\subsection{Suggestion}
Following the Prioritized Provable Broadcast (pPB) step, where committee members broadcast their proposals and gather threshold-signatures as proof of dissemination, the \textit{suggest} step ensures efficient communication and moves the protocol toward an agreement. In the Slim-ABC protocol, the \textit{suggest} step removes the need for the costly Reliable Broadcast (RBC) protocols used in traditional atomic broadcast systems like HoneyBadger \cite{HONEYBADGER01} and FasterDumbo \cite{FASTERDUMBO}, as well as eliminating the message dispersal and recovery mechanisms required in SpeedingDumbo \cite{SPEEDINGDUMBO}. Where the pPB step collects the necessary proof for each broadcast request from committee members, the \textit{suggest} step simplifies the process by broadcasting the proposal along with the collected threshold-signatures directly to all parties. This streamlined approach reduces the communication complexity from the $O(n^2)$ messages typical of traditional protocols to $O(n)$ messages in Slim-ABC. The output of the \textit{suggestion} step is a list of threshold-signatures gathered from at least $n-f$ suggestion messages. These proofs are critical, as they guarantee that the proposal has been received and verified by a majority of the parties, pushing the protocol closer to reaching an agreement. The \textit{suggest} step follows the \textit{propose} step where the selected parties broadcast their requests/proposals and proof (threshold-siganture). The visual representation of the two steps is depicted in Figure \ref{fig:PS} and the construction of the two steps are depicted in Algorithm \ref{algo:Slim-ABC} (lines 15-18, lines 29-41). 

\subsection{ABBA-Invocation}

The final phase of the Slim-ABC protocol is the ABBA-Invocation protocol, responsible for reaching an agreement on a proposal submitted by a committee member. After each suggestion-type message is received in the previous step, the party checks whether the corresponding ABBA has been invoked. If it has not been invoked yet, the party inputs $1$ into the ABBA instance. The inputs to this protocol include the proposal's ID, a bit indicating the input as $1$, the corresponding message $m$, the provable threshold signature $\rho$, and the committee member’s ID. The output of the protocol is the set of requests proposed by a committee member, which are then agreed upon by all the parties.

In the ABBA-Invocation step, an asynchronous binary Byzantine agreement protocol biased towards $1$ is employed, enabling efficient agreement on proposals. The process begins with disseminating the vote to all parties using a $V$-type message and collecting votes from other parties (see lines 5-12 and 27-30 in Algorithm \ref{algo:ABBA-Invocation}). This ensures that if $f+1$ honest parties vote $1$, then all honest parties will eventually vote $1$, resulting in an agreement on the corresponding committee member’s proposal (lines 14-17 in Algorithm \ref{algo:ABBA-Invocation}).  After the votes are disseminated, a party invokes the ABBA instance (line 18 of Algorithm \ref{algo:ABBA-Invocation}). If the ABBA reaches an agreement on the request, it returns the request, the threshold signature ($tsign$), and a corresponding bit $b$ equal to $1$. Upon returning from the ABBA instance, the party checks whether $b = 1$. If so, the party verifies whether it already has the corresponding ciphertext. If not, it retrieves the ciphertext $m$ using the threshold signature ($tsign$) from another party (lines 20-21). Since $m$ is encrypted, the party needs to decrypt it by multicasting a decryption share request and collecting $f+1$ valid decryption shares (lines 22-25). Once the decryption is complete, all parties agree on the outcome of the particular instance. The detailed construction of the ABBA-Invocation protocol is provided in Algorithm \ref{algo:ABBA-Invocation}. This step is crucial to ensure that all parties reach an agreement on the submitted proposals, maintaining the integrity of the Slim-ABC protocol.

\begin{algorithm}[hbt!]
\DontPrintSemicolon
\SetAlgoNoEnd
\SetAlgoNoLine

\SetKwProg{un}{upon}{ do}{}
$msg \leftarrow (\bot, \bot)$\;
$u\leftarrow 0$\;\;

\un{invocation of the ABBA-Invocation$(ID, bit, m, \rho, l$) }{

    \uIf{bit = 1}{
        $u\leftarrow 1$\;
        $msg \leftarrow (m, \rho)$\;
        $\textbf{multi-cast}(ID, V, l, u, msg)$\;
        
    }\uElse{
      $u\leftarrow 0$\;
        $msg \leftarrow (m, \rho)$\;
        $\textbf{multi-cast}(ID, V, l, u, msg)$\;
    }
    \textbf{wait until} $\Sigma= 2f+1$\;
    \uIf{ u = 1 }{
      $v\leftarrow (1, (msg))$
    }\uElse{
      $v\leftarrow (0, (msg))$ 
    }

     $(b, tsign) \leftarrow ABBA_l(v)$\;
     $m \leftarrow msg[[1]$\;
    \If{$b$ = 1}{
        \If{m = $\bot$}{
          use $tsign$ to complete the verifiable authenticated broadcast and deliver the ciphertext $m$.  See Appendix \ref{VCBC}
        }
        $decShare \leftarrow TPKE.DecShare( SK_i,m )$\;
         \textbf{multi-cast} $( ID, decShare )$\;
        \textbf{wait for} $f+1$ valid decShare\;
        $msg \leftarrow TPKE.Dec( PK, m, \{i,decShare\})$\;
        \textbf{return} $msg$\;
        %$result \leftarrow result \cup msg$\;
    }
}

\un{receiving  $(ID, V, l, u', msg') $}{
  \uIf{$u'=1$}{
   $u\leftarrow 1$\;
   $msg \leftarrow msg'$\;
  }
}
\caption{ABBA-Invocation: protocol for the party $p_i$ for an instance $instance$}
\label{algo:ABBA-Invocation}
\end{algorithm}
\subsection{Integration of Subprotocols}

The Slim-ABC protocol reaches an agreement on a subset of parties' requests through a sequence of interconnected sub-protocols. The agreement process begins with the setup of the threshold encryption scheme (see line 1 of Algorithm \ref{algo:Slim-ABC}). Each instance of the protocol starts with the Committee Selection (selectCommittee) protocol, where parties dynamically and randomly select committee members (Algorithm \ref{algo:Slim-ABC}, line 6). Once the committee members are selected, they broadcast their requests using the Prioritized Provable Broadcast (pPB) protocol. This step ensures that each selected party has broadcast the same request to at least $f+1$ honest parties, and these broadcasts are provable (Algorithm \ref{algo:Slim-ABC}, line 10). Upon successful completion of the pPB protocol, the selected committee member broadcasts the proof as a $PROPOSAL$ and as a $SUGGESTION$ (lines 11-14).  If a party is not a committee member, it waits for either a $PROPOSAL$ or a $SUGGESTION$ message. Upon receiving such a message, if no $SUGGESTION$ has been sent yet, the party broadcasts a $SUGGESTION$ message and waits for $2f+1$ suggestions (lines 26-29). When a party receives $2f+1$ suggestions, it checks whether it has already given input to the ABBA instance for all the prioritized parties. If it hasn't, the party initiates the remaining steps of the ABBA-Invocation protocol (lines 20-24 of Algorithm \ref{algo:Slim-ABC}). Each of these sub-protocols—Committee Selection, Prioritized Provable Broadcast, Suggestion, and ABBA-Invocation—ensures that the Slim-ABC protocol operates efficiently and securely, even in the presence of Byzantine faults. This integration allows Slim-ABC to reach an agreement while maintaining low communication complexity and robust fault tolerance.

\begin{algorithm}[hbt!]
\DontPrintSemicolon
\SetAlgoNoEnd
\SetAlgoNoLine

\SetKwProg{un}{upon}{ do}{}
%$\textbf{let}$ $ABBA_k$ is the $k^{th}$ instance of the ABBA protocol.\;
$\langle PK, SK_i \rangle \leftarrow TPKE.Setup( 1^K )$ See \ref{app:TPKE}\;
$instance \leftarrow 1$\;

\While{true}{
$suggest \leftarrow false$\;
$result \leftarrow \{\}$\;
%$instance \leftarrow 1$\;
$\Sigma_s \leftarrow 0$\;
$\Sigma \leftarrow  \{\}$ \;
$ PrioritizedParties \leftarrow selectCommittee ( id, instance )$\;

\uIf{$p_{id}$ $\in$ PrioritizedParties}{
    $ID \leftarrow ( instance, id )$\;
    $m \leftarrow TPKE.Enc( PK, requests )$\;
    $\rho \leftarrow pPB ( ID, m, PrioritizedParties)$\;
   % \textbf{wait for} $pPB$ to return\;
    \un{$pPB$ return with $\rho$}{
        suggest = true\;
        \textbf{multi-cast} $( PROPOSAL, ID, m, \rho )$\;
        \textbf{multi-cast} $( SUGGESTION, ID, m, \rho, i )$\;   
    }
}\uElse{
    \textbf{wait for} a $PROPOSAL$ or a $SUGGESTION$ type of message\; 
}
\;

%\textbf{wait for} $( n-f )$ suggestions\;

\textbf{wait for} $\Sigma_s = 2f+1$\;\;

  \For{$k$ $\in$ $PrioritizedParties$}{   
        \If{no input has been provided to $ABBA_{k}$}{
            % $v \leftarrow (0, (\bot, \bot))$\;
            % $(b, threshold-signature,\bot ) \leftarrow ABBA_k(v)$\;
            $msg \leftarrow ABBA-Invocation(ID, 0, \bot, \bot, k)$\;
            $result \leftarrow result \cup msg$ \;
        }
    }
$instance \leftarrow instance +1$\;
\textbf{output} $result$\;    \;

}
 \un{receiving a $( PROPOSAL, ID', m,  \rho )$ message for the first time}{
        \If{ suggest = false}{
           suggest = true\;
           \textbf{multi-cast} $( SUGGESTION, ID, m, \rho, ID'.id)$\;
        }
       
}
\;

\un{receiving a $( SUGGESTION, ID, m, \rho,l )$ from a selected party $p_j$}{
    $\Sigma_s \leftarrow \Sigma_s + 1$\;
    \If{ suggest = false}{
           suggest = true\;
           \textbf{multi-cast} $( SUGGESTION, ID, m, \rho,l )$\;
    }

    \If{no input has been provided to $ABBA_{l}$}{

        $msg \leftarrow ABBA-Invocation(ID, 1, m, \rho, l)$\;
        $result \leftarrow result \cup msg$
        %$v \leftarrow (1, (m, \rho))$\;
        %$(b, threshold-signature, m) \leftarrow ABBA_l(v)$\;
        %input 1 to $ABA_j$\;
    }
}

\caption{Slim-ABC: protocol for the party $p_i$ for an instance $instance$}
\label{algo:Slim-ABC}
\end{algorithm}

\section{Evaluation}\label{evaluation}

%\subsection{Experimental Setup}
%To validate the effectiveness of our proposed protocols, we conducted extensive evaluations using a simulated environment that replicates an asynchronous message passing system. The setup includes a fixed number of parties, with a specified fraction being Byzantine. We implemented both Efficient-VABA and Slim-ABC protocols and conducted tests on various network sizes and fault scenarios.

\subsection{Metrics for Evaluation}
We evaluated the performance of our protocols based on the following metrics:
\begin{itemize}
    \item \textbf{Message Complexity:} The total number of messages generated by honest parties during protocol execution.
    \item \textbf{Communication Complexity:} The total bit-length of messages generated by honest parties.
    \item \textbf{Time Complexity:} The total number of rounds of communication required before the protocol terminates.
\end{itemize}
These metrics help us assess the protocol's efficiency, comparing its performance to existing atomic broadcast protocols.

\subsection{Results and Discussion} 
% \nasit{Should we use results and discussion?}\nasit{Can we put security analysis and efficiency analysis in the result}

Our analysis demonstrates that the proposed Slim-ABC protocol preserves the key security properties of the Asynchronous Common Subset (ACS) protocol while significantly reducing communication complexity compared to existing atomic broadcast protocols. We provide both security and efficiency analyses to highlight the strengths of Slim-ABC.
%Our case studies demonstrate that  Slim-ABC significantly reduces communication complexity compared to existing atomic broadcast protocols. The key findings include:
%\begin{itemize}
%  \item By replacing the costly reliable broadcast (RBC) protocol with the prioritized provable broadcast (pPB) protocol, we remove the $O(n^3)$ term from the communication complexity. 
  %  \item By leveraging the prioritized provable-broadcast (pPB) mechanism and threshold encryption, Slim-ABC improves communication costs by a factor of $O(n)$. The protocol efficiently handles duplicate requests while maintaining low communication costs.
%\end{itemize}

\subsubsection{Security Analysis}
The proposed Slim-ABC protocol provides an atomic broadcast protocol for a subset of parties' requests by applying the ACS protocol and the threshold encryption scheme. To analyze the security of Slim-ABC protocol, we considered two main aspects: the reduction from atomic broadcast to ACS and ensuring that the proposed Slim-ABC protocol satisfies the ACS properties. The proposed protocol is a reduction from ACS to prioritized provable broadcast (pPB) and asynchronous binary Byzantine agreement (ABBA) biased towards 1. The ABBA biased towards 1 requires that the provable proof from the pPB protocol must reach at least one honest party or $f+1$ (including $f$ faulty) parties. Lemma \ref{one proposal} and Lemma \ref{common proposal} prove that the proposed protocol satisfies the requirement. Theorem \ref{Theorem}
proves that the protocol satisfies the properties of ABC and ACS protocols. The Lemma \ref{one proposal} was first proposed and proved by Sony et al. \cite{OHBBFT}. We adopt that proof. A version of Lemma \ref{common proposal} is proposed and proved by Sony et al. \cite{PMVBA}.

\begin{lemma}\label{one proposal}
   In the $propose$ step of the protocol, one or more provable-broadcast proof reaches more than one party.
\end{lemma}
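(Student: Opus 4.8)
The plan is to exhibit one concrete provable-broadcast proof and show that it demonstrably spreads, using the honest committee member that committee selection guarantees. First I would apply the \emph{Validity} property of the $(1,\kappa,\epsilon)$-Committee Selection protocol with $\kappa = f+1$: every honest party outputs the same set $C$ with $|C| = f+1$ that contains at least one honest party $p_h$. Here \emph{Agreement} supplies the common $C$ and \emph{Termination} guarantees it is eventually produced, so every honest party that later receives $p_h$'s pPB message will recognise $p_h \in PParties$ and will not ignore the request.

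Second, I would argue that $p_h$'s instance $pPB\langle ID, m, C\rangle$ terminates with a valid proof. The payload $m = \mathrm{TPKE.Enc}(PK, \mathit{requests})$ is externally valid by construction, $p_h$ is honest, Slim-ABC never invokes $\mathrm{PB}\text{-}\mathrm{abandon}$, and the channels are reliable, so PB-Termination applies directly. Concretely, each of the $\ge n-f = 2f+1$ honest parties eventually receives $\langle ID, m\rangle$ from $p_h$, checks $p_h \in PParties$, contributes a valid sign-share and replies; thus $p_h$ collects $n-f$ valid sign-shares and $\mathrm{CombineShare}$ returns a threshold signature $\rho$ that is a valid proof for $\langle ID, m\rangle$ (i.e.\ $\text{threshold-validate}(\langle ID, m.v\rangle, \rho)$ holds).

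Third, in the \emph{propose} step (Algorithm~\ref{algo:Slim-ABC}, lines 11--14) $p_h$ multicasts $(PROPOSAL, ID, m, \rho)$ to all $n$ parties; by reliability every honest party eventually receives it, and since there are $n-f = 2f+1 \ge 2$ honest parties (indeed more than $f$ of them), the provable-broadcast proof $\rho$ reaches strictly more than one party, which establishes the lemma.

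The step I expect to be the crux is the second one: justifying in the asynchronous model that $p_h$'s pPB actually completes. It hinges on showing that all honest parties have already output the same committee $C$ before they are asked to sign for $p_h$ (so none of them discards the request), which rests on Committee Selection's \emph{Termination} and \emph{Agreement}; on the external validity of the encrypted payload; and on the fact that Slim-ABC issues no $\mathrm{PB}\text{-}\mathrm{abandon}$ call. Given these, the remaining parts are a direct invocation of PB-Termination together with channel reliability. This is precisely the argument of Sony et al.~\cite{OHBBFT}, whose proof we adopt.
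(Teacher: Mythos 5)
Your proof is correct, but it takes a genuinely different route from the paper's. The paper's proof is a two-line counting argument: at most $t \leq f+1$ distinct proposals are issued, while $2f+1 \leq m \leq 3f+1$ parties each receive at least one proposal, so by pigeonhole some proposal must be common to more than one party. You instead argue constructively: Committee Selection's \emph{Validity} guarantees an honest member $p_h$, PB-Termination plus channel reliability guarantee that $p_h$'s pPB completes with a valid threshold signature $\rho$, and the subsequent multicast of $(PROPOSAL, ID, m, \rho)$ delivers that one proof to all $n-f = 2f+1$ honest parties. Your conclusion is strictly stronger (a single identified proof reaches \emph{all} honest parties, not merely two), and you make explicit a liveness fact the paper's proof leaves implicit --- namely, why any proposal is received by $2f+1$ parties at all, which would fail if every committee member were Byzantine and silent. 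The trade-off is that your argument leans on the honest-committee-member guarantee (which holds only with probability $1-\epsilon$) and on PB-Termination, whereas the paper's pigeonhole step needs neither once its premise is granted. One small inaccuracy: you state at the end that yours is ``precisely the argument of Sony et al.'' that the paper adopts; the paper does cite that source, but the proof it actually gives is the counting argument, not your liveness argument.
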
 

\begin{proof}
   We know that $t \leq f + 1$ parties propose their requests with the proof, and $2f+1 \leq m \leq 3f + 1$ parties receive at least one proposal. Therefore, due to the fraction $\frac{3f+1}{f+1}$, at least one proposal is common to more than one party.
\end{proof} 

\begin{lemma} \label{common proposal}
    In the $suggest$ step, one or more proposals are common to $\langle 2f+1 \rangle$ parties.
\end{lemma}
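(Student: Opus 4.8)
The plan is to single out one particular proposal and show that at least $2f+1$ parties come to hold it, refining the counting argument of Lemma~\ref{one proposal}. Recall that $n = 3f+1$. By the Validity property of the committee selection, the selected set contains at least one honest party; call it $h$. I would follow $h$'s proposal through the $propose$ and $suggest$ steps.

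First I would check that $h$'s prioritized provable broadcast completes. The value $h$ broadcasts is the ciphertext $m_h = TPKE.Enc(PK, requests)$, which is well formed and hence externally valid; $h$ is honest; Slim-ABC never invokes $PB$-$abandon$; and in the asynchronous model every message between honest parties is eventually delivered. So the hypotheses of PB-Termination are met, and $h$'s $pPB$ instance returns to $h$ a threshold signature $\rho_h$ with $\text{threshold-validate}(\langle ID_h, m_h\rangle, \rho_h) = \text{true}$. By the $pPB$-return handler in Algorithm~\ref{algo:Slim-ABC}, on this return $h$ multicasts $(PROPOSAL, ID_h, m_h, \rho_h)$ and $(SUGGESTION, ID_h, m_h, \rho_h, h)$ to all $n$ parties.

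Next I would invoke reliability of the channels: every message $h$ sends is eventually delivered to every honest party, and there are at least $n - f = 2f+1$ honest parties. Each of them therefore eventually receives $(SUGGESTION, ID_h, m_h, \rho_h, h)$ and hence holds the valid proposal $(m_h, \rho_h)$ of the selected party $h$. Thus at least $2f+1$ parties share this one proposal, which is the statement. (Alternatively, starting from Lemma~\ref{one proposal}: among the $m \ge 2f+1$ parties that receive a proposal in the $propose$ step at most $f$ are faulty, so some honest party $p^\star$ receives a valid proposal; when $p^\star$ takes its $SUGGESTION$ step it multicasts that proposal to all $n$ parties, and reliability again delivers it to all $\ge 2f+1$ honest parties.)

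The subtlety to be careful about is the $suggest$ flag: an honest party multicasts a $SUGGESTION$ at most once, so the $2f+1$ bound cannot be obtained by claiming that "every party re-broadcasts the common proposal." The point is that re-broadcasting is unnecessary — the single honest originator ($h$, or $p^\star$ above) multicasts the proposal to all $n$ parties in one step, and point-to-point reliability by itself carries it to the $\ge 2f+1$ honest recipients. The remaining items are routine: PB-Termination's "no party abandons" hypothesis holds vacuously because Slim-ABC contains no $PB$-$abandon$ call, and $m_h$ is externally valid because it is an honestly produced ciphertext.
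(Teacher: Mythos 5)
Your proof takes a genuinely different route from the paper's. The paper does not track any particular committee member's proposal: it performs a case analysis on how many selected and non-selected parties the adversary keeps responsive, reduces all cases to one, and then argues by counting/contradiction that among the $2f+1$ suggestions each responsive party actually collects before leaving the \textit{suggest} step, some proposal must occur at $2f+1$ distinct parties. You instead single out the honest committee member $h$ guaranteed by CS-Validity, push its proposal through PB-Termination, and let channel reliability deliver its single \textit{SUGGESTION} multicast to all $2f+1$ honest parties. Your version is cleaner and correctly handles the one-suggestion-per-party flag, but it proves a weaker, ``eventual'' statement.

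The gap is timing under adversarial scheduling. In the asynchronous model the adversary chooses delivery order, so it can deliver to every honest party $2f+1$ suggestions that all concern some other (possibly Byzantine) committee member's proposal before delivering any message originating from $h$. Each honest party then satisfies \textbf{wait for} $\Sigma_s = 2f+1$ and exits the \textit{suggest} step without ever holding $h$'s proposal; in particular (lines 19--24 of Algorithm \ref{algo:Slim-ABC}) it immediately inputs $0$ to $ABBA_h$. Your argument only shows that $h$'s suggestion arrives \emph{eventually}, which is too late: the lemma is invoked in the Totality proof precisely to guarantee that $f+1$ honest parties input $1$ to a common $ABBA$ instance, and that requires the common proposal to be among the suggestions received \emph{before} each party stops waiting. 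The paper's pigeonhole over the collected suggestion sets is scheduling-robust exactly because it does not commit in advance to which proposal will be the common one; your argument commits to $h$'s, and the adversary can arrange that $h$'s is not the one that arrives in time. To repair your approach you would need an additional argument about what must already be true of the $2f+1$ suggestions a party has when it exits the wait, which is essentially the paper's counting step.
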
 

\begin{proof}
See Appendix \ref{Lemma2 proof}
\end{proof}

\begin{theorem} \label{Theorem}
Except with negligible probabilities, the Slim-ABC protocol satisfies the \textit{Agreement}, \textit{Validity}, and \textit{Totality} properties of the ACS protocol, given that the underlying prioritized-provable-broadcast, committee-selection, and the $ABBA$ sub-protocols are secure.
\end{theorem}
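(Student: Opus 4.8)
The plan is to verify the three VACS properties one at a time, each reducing to the stated guarantees of committee selection, pPB, and the $1$-biased ABBA, together with Lemmas~\ref{one proposal} and~\ref{common proposal} and the correctness of the threshold encryption scheme. Two structural facts underlie all three arguments and I would establish them first. (i) By \emph{Agreement} of committee selection every honest party computes the same set $PrioritizedParties$, which by \emph{Validity} has size $f+1$ and contains an honest member except with probability $\epsilon$. (ii) Every honest party eventually feeds an input to \emph{every} $ABBA_k$ with $k \in PrioritizedParties$: on receiving a valid $SUGGESTION$ with label $k$ a party inputs $1$ (Algorithm~\ref{algo:Slim-ABC}, lines 36--41), and otherwise, once it has crossed the $\Sigma_s = 2f+1$ barrier, the for-loop supplies input $0$ (lines 20--24); moreover that barrier, once crossed by one honest party, is crossed by all, because each party emits at most one $SUGGESTION$, so $2f+1$ received suggestions come from $2f+1$ distinct senders of which at least $f+1$ are honest, reliable channels deliver each of those to all honest parties, and the echo rule (lines 26--35) then makes every honest party emit its own $SUGGESTION$.

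For \textbf{Agreement}, I would invoke \emph{Agreement} of each $ABBA_k$ to obtain, at every honest party, the same bit $b_k$ and accompanying $tsign$. When $b_k=1$, \emph{PB-Provability} of pPB forces the threshold signature $\rho_k$ to certify a unique value and hence a unique ciphertext $m_k$, and a party lacking $m_k$ reconstructs exactly this $m_k$ through the authenticated-broadcast recovery step (Appendix~\ref{VCBC}). Correctness of the threshold encryption scheme then yields a single plaintext recovered from $m_k$ by every honest party, so instance $k$ contributes the same element to $result$; since the union is taken over the common set $PrioritizedParties$, all honest parties output the same $result$.

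For \textbf{Validity} I would split into non-emptiness and external validity. Non-emptiness is where Lemmas~\ref{one proposal} and~\ref{common proposal} enter: some committee member's proposal, say with label $k^\ast$, is carried by $SUGGESTION$ messages to $\langle 2f+1 \rangle$ parties, hence to at least $f+1$ honest parties, each of which inputs $1$ to $ABBA_{k^\ast}$ upon first receiving that $SUGGESTION$ --- which, since each party emits at most one $SUGGESTION$, occurs before its for-loop could file a $0$ for $k^\ast$; the $V$-message pre-round of the $1$-biased ABBA (Algorithm~\ref{algo:ABBA-Invocation}, lines 5--17 and 27--30) then propagates a $1$-vote to all honest parties, so every honest party feeds $1$ into $ABBA_{k^\ast}$ and it outputs $1$ by \emph{Validity}; thus $k^\ast$'s request lies in $result$ and $|result|\ge 1$. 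For the external-validity clause, $b_k=1$ implies, via ABBA \emph{Validity}, that some honest party voted $1$ and therefore held a valid $\rho_k$; \emph{PB-Provability} then gives $f+1$ honest deliveries of the underlying message, \emph{PB-Validity} gives $EX\text{-}PB\text{-}VAL = \text{true}$ on it, and \emph{PB-Selected} identifies a committee member as its author, while decrypting $m_k$ returns precisely the externally-valid request that is placed into $result$.

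For \textbf{Totality}, assume a selected honest party $p_j$ holds an input. Since no honest party invokes PB-abandon and honest-to-honest messages are eventually delivered, \emph{PB-Termination} makes $pPB(ID_j, m_j, PrioritizedParties)$ return $\rho_j$ to $p_j$, which multicasts $PROPOSAL$ and $SUGGESTION$; by the echo rule every honest party then emits a $SUGGESTION$ for $p_j$'s proposal, so all honest parties cross the $2f+1$ barrier, feed $1$ into $ABBA_j$, and obtain $b_j=1$ by ABBA \emph{Termination} and \emph{Validity}. By ABBA \emph{Termination} each of the $f+1$ instances also halts, so the for-loop completes at every honest party, and for $ABBA_j$ each honest party either already holds $m_j$ or recovers it (Appendix~\ref{VCBC}), performs the threshold decryption with the $f+1$ shares supplied by the $\ge 2f+1$ honest parties, and appends the request, so every honest party outputs a $result$ with $|result|\ge 1$. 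The $\epsilon$-failure of committee selection, the negligible failure of pPB's provability and validity, and the negligible failure probabilities of the ABBA instances and the threshold schemes are all absorbed into the ``except with negligible probability'' qualifier. The step I expect to be the main obstacle is the non-emptiness half of \textbf{Validity}: it is the one place where the combinatorial Lemmas~\ref{one proposal}--\ref{common proposal} must be chained with the precise semantics of the $1$-biased ABBA and its $V$-message pre-round to force some $b_k=1$; a secondary subtlety shared with \textbf{Agreement} and \textbf{Totality} is arguing that the recovery step of Appendix~\ref{VCBC} delivers exactly the provability-unique ciphertext $m_k$.
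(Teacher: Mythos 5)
Your proposal is correct and follows the same overall reduction as the paper: each of \textit{Agreement}, \textit{Validity}, and \textit{Totality} is discharged against the stated guarantees of committee selection, pPB, and the $1$-biased ABBA, chained with Lemmas~\ref{one proposal} and~\ref{common proposal}. The differences are organizational but worth noting. First, you establish non-emptiness ($|V|\ge 1$, i.e.\ that some $ABBA_{k}$ decides $1$) inside \textbf{Validity} by running Lemmas~\ref{one proposal}--\ref{common proposal} through the biased validity of ABBA, whereas the paper's \textit{Validity} paragraph simply assumes an instance returned $1$ and defers the existence argument to \textit{Totality}, where it is proved by contradiction (``suppose all $ABBA$ instances output $0$\ldots''); your \textbf{Totality} argument is instead direct, starting from \emph{PB-Termination} for the honest committee member guaranteed by committee-selection validity. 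The direct route is cleaner and avoids the paper's somewhat awkward appeal to lines 20--24 of Algorithm~\ref{algo:Slim-ABC}. Second, you explicitly verify the precondition of ABBA \emph{Termination} --- that every honest party eventually supplies an input to every $ABBA_k$, via the $\Sigma_s=2f+1$ barrier and the suggestion echo rule --- which the paper's proof uses implicitly but never argues. Third, for \textbf{Agreement} you invoke \emph{PB-Provability} plus threshold-encryption correctness to pin down a unique ciphertext and plaintext per instance, where the paper leans on the ``verifiability'' of pPB (the Appendix~\ref{VCBC} recovery) for the same purpose; these are equivalent in substance. The residual subtlety you flag --- the race between receiving a $SUGGESTION$ for $k^\ast$ and the for-loop filing a $0$ for $k^\ast$ after the barrier --- is real, but the paper's proof does not resolve it any more precisely than you do, so it is not a gap relative to the paper.
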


\begin{proof}
\textit{Agreement:} To prove that the Slim-ABC protocol satisfies the \textit{agreement} property, we prove that when an honest party outputs a set $|V|= m$, then every honest party outputs $V$.

The set $V$ contains the proposal from the $m$ number of committee members, where $1\leq m \leq f+1$. Without the loss of generality, we assume the set $V$ contains one provable broadcast from a selected party. It was received in the propose or suggest step. The corresponding committee member (CM) must receive $1$ for its $ABBA$ instance. Due to the \textit{agreement} property of the $ABBA$ protocol, all honest parties will also output $1$. Hence, a threshold number of honest parties will receive the provable broadcast due to the property of Lemma \ref{common proposal}. 

On the other hand, due to the \textit{validity} property of the $ABBA$ protocol, at least one honest party inputs $1$ to the $ABBA$ instance. This implies that the party must have received the related provable broadcast and message. The \textit{verifiability} property of the pPB protocol ensures that all honest parties will receive the same message (see lines 21-22 of Algorithm \ref{algo:ABBA-Invocation}). 

Hence, every honest party outputs$ \{v_j\}_{j\in CM}= V$

%If an honest party outputs $v$, then every honest party outputs $v$. If a party outputs $v_k$, it is the proposal from party $p_k$, and the corresponding $ABA_k$ outputs $1$. From the $agreement$ property of the $ABA$ protocol, if an honest party outputs $1$ for a $ABA_k$, every honest party outputs $1$ for that $ABA_k$.

\textit{Validity:} To prove the Slim-ABC satisfies the validity property, we show that $|V| \geq 1$ and $V$ contain the input that satisfies the external-validity property.

If an honest party outputs a set $V= \{v_j\}_{j\in CM}$. We assume the set CM (committee members) includes only one provable-broadcast that was received in the proposal or suggestion step. According to the Slim-ABC protocol, we know that if a   $ABBA$ instance returns $1$, then due to the validity property of $ABBA$, at least one party inputs $1$ to that $ABBA$ instance. It implies that the honest party has received the provable-broadcast and the message.

The \textit{verifiability} property of pPB can ensure that all honest parties will receive $(value, v_j)_{j \in CM}$. Therefore, we have $|V| \geq 1$. Notice that there are at most $f$ faulty parties, and the value satisfies the external-validity property.

%The output vector $|V| \geq 1$ and the vector values satisfy the provability property. From Lemma \ref{common proposal}, at least $f+1$ honest parties receive at least one selected party's broadcast with proof.

\textit{Totality:} To prove that slim-ABC satisfies the \textit{totality} property, we show that all honest parties produce an output if $m$ $(1 \leq m \leq f+1)$ parties have an input.

Since $m$ parties have input, according to the Lemma \ref{common proposal}, at least $f+1$ honest parties can receive value messages from distinct committee members. Besides, according to the CS protocol, at least one honest party belongs to the committee.

We will first prove that at least one $ABBA$ instance returns $1$. (Our assumption is that $m$ is at least $1$)

Let us assume all $ABBA$ instances output $0$. In this case, lines 23-24 of Algorithm \ref{algo:Slim-ABC} will never execute because line 20 implies that it has voted $1$ to at least one $ABBA$ instance; therefore, no $ABBA$ instances get input from an honest party. However, according to the validity property of $ABBA$, which is biased towards $1$, at least $f+1$ honest parties input $0$ to an $ABBA$ instance to output that $ABBA$ instance $0$, which is a contradiction. 

Secondly, since Lemma \ref{one proposal} ensures that a provable-broadcast is common to more than one party and consequently Lemma \ref{common proposal} ensures that at least ($f+1$) parties receive $m$ number of provable-broadcast and input $1$ to those $ABBA$ instances. Again, according to the validity of $ABBA$ those $ABBA$ returns $1$ to all. 

Hence, at least one $ABBA$ instance exists that returns $1$. Due to the validity of $ABBA$ at least one honest party inputs $1$ to $ABBA_k$. It implies that such an honest party must have receives a proposal or suggestion type message and the provable-broadcast. The verifiable property of the pPB protocol now can ensure that all honest parties will have the value (see line 21-12 of ALgorithm \ref{algo:ABBA-Invocation}). Hence all honest parties can produce output for $m$ number of selected parties.

%All honest parties produce an output if one party has its input. Since one party produces the $threshold-signature$ proof and proposes, every party receives the value and proof for the party. If the honest parties input $1$ to the corresponding $ABA$ protocol, then the protocol returns $1$. Hence, all honest parties produce an output.
\end{proof}

\subsubsection{Efficiency Analysis}
The efficiency of an atomic broadcast (ABC) protocol depends on message complexity, communication complexity, and running time. We analyze the proposed protocol's efficiency by examining its sub-components: the pPB sub-protocol, committee selection, propose-suggest steps, and the ABBA-Invocation sub-protocol.

\paragraph*{Running Time:} Each sub-protocol and step, except for ABBA-Invocations, has a constant running time. The running time of the proposed protocol is dominated by the ABBA sub-protocols. The Slim-ABC protocol runs the ABBA protocol biased towards 1 $f+1$ times. Therefore the running time of the Slim-ABC protocol is the running time of the $ABBA$ instances, which is $log(f+1)$ or $logn$ \cite{FASTERDUMBO} in expectation.  In conclusion, the expected running time of the protocol is $logn$.

\paragraph*{Message Complexity:} In all sub-protocols and steps, except for pPB and propose steps, each party communicates with all other parties. Every party transmit $O(1)$ information to all other parties (See line 5 of Algorithm \ref{algo:cs}, lines 15-16, 32 and 38 of Algorithm \ref{algo:Slim-ABC} and lines 12 and 23 of Algorithm \ref{algo:ABBA-Invocation}. Each of the multi-cast send $O(1)$ information). Since $n$ parties send $O(1)$ information to the $n$ parties, the message complexity is $O(n^2)$. The expected message complexity of the ABBA protocol is also $O(n^2)$.

\paragraph*{Communication Complexity:} The communication complexity of each sub-protocol and step is $O(n^2(l + \lambda))$, where $l$ is the bit length of input values and $\lambda$ is the bit length of the security parameter. To calculate the communication complexity we use the same approach as message complexity. We observe that in no step a party transmit $O(n)$ information. Thus, the communication complexity is same as message complexity only includes the bit length of the input values and the bit length of the security parameters. The expected communication complexity of the Slim-ABC protocol is also $O(n^2(l + \lambda))$.

%First, let us briefly go through the steps of the Slim-ABC protocol. From Algorithm \ref{algo:SlimACS1}, the message exchange happens when parties select $\langle f+1 \rangle$ parties using the committee selection sun-protocol. When the selected parties broadcast their requests using the pPB sub-protocols, the selected parties multi-cast the value and proof as a $proposal$; every party suggests its received proposals to $n$ parties; and $n$ parties run $\langle f+1 \rangle$ $ABBA$ instances. 

%Since the running time of the committee-selection, pPB, and $Proposal-Suggestion$ steps are constant, the expected \textit{time complexity} of the Slim-ABC protocol is the run time of the $ABBA$ instances, which is $log(f+1)$ or $logn$ \cite{FASTERDUMBO} in expectation. Similarly, the \textit{message complexity} of the protocol is $O(n^2)$ because the \textit{message complexity} of one instance of $ABBA$ protocol and the other steps are $O(n^2)$. We consider one instance of $ABBA$ because the $ABBA$ protocols do not start together. The \textit{communication complexity} of the protocol is $O(n^2(l+\lambda))$, where the bit length of the value(requests) $v$ and proof $\rho$ (threshold-signature) are no longer than $l$. The bit length of the digital signatures and the threshold-encryption schemes are no longer than $\lambda$.  

\subsection{Comparison with Existing Protocols}
We compared our protocol against the existing atomic broadcast protocols and the other committee based protocols. Our findings indicate:
%\begin{itemize}
  %  \item \textbf{Asynchronous Protocols:} Slim-ABC outperform traditional asynchronous protocols like MVBA and Dumbo-MVBA in terms of both message and communication complexity. While MVBA has a message complexity of $O(n^3)$, Efficient-VABA and Slim-ABC achieve $O(n^2)$ and $O(n^2(L+K))$ respectively.
%\end{itemize}

\subsubsection{Comparison with Existing Atomic Broadcast Protocol} \label{ComparisonABC}
As discussed earlier, when the inputs of each party are nearly identical, outputting the requests of $n-f$ parties is not a viable solution. This approach results in higher computational effort without increasing the number of accepted transactions. Table \ref{table:Table_III} provides a comparison of the communication complexity of our protocol with existing atomic broadcast protocols. Notably, no atomic broadcast protocol can eliminate the multiplication of $O(n^3)$ terms. Here, we focus solely on the communication complexity.

\begin{table}[h!] \label{ABC}
    
\caption{Comparison of the communication complexity with the existing atomic broadcast protocols}
\begin{center}
\begin{tabular}{||c |c |c |c ||} 
 \hline
 Protocols & Communication Complexity  \\ [0.5ex] 
 \hline\hline
 HB-BFT/BEAT0 \cite{HONEYBADGER01} & $O(ln^2 + \lambda n^3 logn)$  \\ 
 \hline
 BEAT1/BEAT2 \cite{BYZ06} & $O(ln^3 + \lambda n^3)$   \\
 \hline
 Dumbo1 \cite{FASTERDUMBO} & $O(ln^2 + \lambda n^3 logn)$   \\
 \hline
 Dumbo2 \cite{FASTERDUMBO} & $O(ln^2 + \lambda n^3 logn)$  \\ [1ex] 
 \hline
 Speeding Dumbo \cite{SPEEDINGDUMBO} & $O(ln^2 + \lambda n^3 logn)$  \\ [1ex] 
 \hline
  Our Work  & $O(ln^2 + \lambda n^2)$  \\ [1ex] 
 \hline
\end{tabular}
\label{table:Table_III}
\end{center}
\end{table}

\subsubsection{Comparison of Resilience, Termination, and Safety with Committee-Based Protocols}
We compare our work with notable committee-based protocols, specifically focusing on resilience, termination, and safety properties. Table \ref{table:Table_II} highlights these comparisons. COINcidence \cite{BYZ19} assumes a trusted setup and violates optimal resilience. It also does not guarantee termination and safety with probability (w.p.) $1$. Algorand \cite{BYZ21} assumes an untrusted setup, with resilience dependent on network conditions, and does not guarantee termination w.p. $1$. The Dumbo \cite{FASTERDUMBO} protocol uses a committee-based approach, but its committee-election protocol does not guarantee the selection of an honest party, thus failing to ensure agreement or termination with probability $1$. Our protocol achieves optimal resilience and guarantees both termination and safety, as our committee-election process ensures the selection of at least one honest party. This guarantees that the protocol can make progress and reach agreement despite adversarial conditions.

\begin{table}[h!]
    
\caption{Comparison for performance metrics of the committee based protocols}
\begin{center}
\begin{tabular}{||c |c |c |c ||} 
 \hline
 Protocols & n$>$ & Termination & Safety\\ [0.4ex] 
 \hline\hline
 COINcidence \cite{BYZ19} & 4.5f & whp & whp\\ 
 \hline
 Algorand \cite{BYZ21} & * & whp & w.p. 1\\ 
 \hline
 Dumbo1 \cite{FASTERDUMBO} & 3f & whp & w.p. 1\\ 
 \hline
 Dumbo2 \cite{FASTERDUMBO} & 3f & whp & w.p. 1\\ 
 \hline
 Our work & 3f & w.p. 1 & w.p. 1\\
 \hline
\hline
\end{tabular}
\label{table:Table_II}
\end{center}
\end{table}

%Though our protocol differs from the atomic broadcast protocol in a number of proposals, we provide a comparison of our protocol with the atomic broadcast protocol in Appendix \ref{ComparisonABC}.

\section{Conclusion}\label{conclusion}
In this paper, we addressed the Byzantine Agreement (BA) problem in designing atomic broadcast protocols, presenting a novel protocol Slim-ABC. This protocol reduces message and communication complexity by utilizing a smaller, randomly selected subset of parties and leveraging a prioritized provable-broadcast mechanism with threshold encryption. Our extensive security and efficiency analysis demonstrate substantial reductions in message and communication complexities compared to the existing atomic broadcast protocols without compromising security. However, the protocol's reliance on random selection introduces performance variability, and their security assumes a majority of honest parties, which may not hold in highly adversarial environments. Future work can focus on increasing committee size to increase the accepted requests without compromising message and communication complexities. Furthermore, we can focus on testing the protocol in real-world systems like blockchain platforms, enhancing their resilience to complex adversarial models, and integrating them with other BA mechanisms to create more efficient and secure distributed systems.

\bibliography{references}

%%
%% If your work has an appendix, this is the place to put it.
\appendix

%\section{Appendix}

\section{Definitions}

\subsection{Verifiable Consistent Broadcast} 
\label{VCBC}

\begin{definition}[Verfiability] A consistent broadcast protocol is called verifiable if the following holds, except with negligible probability: When an honest party has delivered $m$, then it can produce a single protocol message $M$ that it may send to other parties such that any other honest party will deliver $m$, upon receiving $M$.

\end{definition}
A protocol completes a verifiable consistent broadcast if it satisfies the following properties:

\begin{itemize}
    \item \textbf{Validity.} If an honest party sends $m$, then all honest parties eventually delivers $m$.
    \item \textbf{Consistency.} If an honest party delivers $m$ and another honest party delivers $m'$, then $m=m'$.
    \item \textbf{Integrity.} Every honest party delivers at most one request. Moreover, if the sender $p_s$ is honest, then the request was previously sent by $p_s$.
\end{itemize}
\subsection{Asynchronous binary Byzantine Agreement (ABBA)} \label{ABBAD}
The ABBA protocol guarantees the following properties. Additionally, the biased external validity property applies to the biased ABBA protocol.

\begin{itemize}
    \item \textbf{Agreement.} If an honest party outputs a bit $b$, then every honest party outputs the same bit $b$.
    \item \textbf{Termination.} If all honest parties receive input, then all honest parties will output a bit $b$.
    \item \textbf{Validity.} If any honest party outputs a bit $b$, then $b$ was the input of at least one honest party.
    \item \textbf{Biased External Validity.} If at least $\langle f + 1 \rangle$ honest parties propose $1$, then any honest party that terminates will decide on $1$.
\end{itemize}

\subsection{Threshold Signature Scheme} \label{TSS}
We utilize the threshold signature scheme from \cite{THRESH01, SECURE02}. The security properties and the algorithm definitions we use here are adopted from \cite{PMVBA}. The $(f+1, n)$ non-interactive threshold signature scheme provides a set of algorithms used by $n$ parties, with up to $f$ potentially faulty. The scheme satisfies the following security properties, except with negligible probabilities:

\begin{itemize}
    \item \textbf{Non-forgeability.} A party requires total $t$ \textit{signature shares} to output a valid threshold signature. Since an adversary can corrupt up to $f$ parties ($f < t$) and thus cannot generate enough \textit{signature shares} to create a valid threshold signature as a proof for a message, it is computationally \textit{infeasible} for an adversary to produce a valid threshold signature.
    \item \textbf{Robustness.} It is computationally \textit{infeasible} for an adversary to produce $t$ (where $t > f$) valid \textit{signature shares} such that the output of the share combining algorithm is not a valid threshold signature.
\end{itemize}

The scheme provides the following algorithms:
\begin{itemize}
    \item \textit{Key generation algorithm: KeySetup($\{0,1\}^\lambda, n, f+1) \rightarrow \{UPK, PK, SK\}$}. Given a security parameter $\lambda$, this algorithm generates a universal public key $UPK$, a vector of public keys $PK := (pk_1, pk_2, \ldots, pk_n)$, and a vector of secret keys $SK := (sk_1, sk_2, \ldots, sk_n)$.

    \item \textit{Share signing algorithm: SigShare$_i(sk_i, m) \rightarrow \sigma_i$}. Given a message $m$ and a secret key share $sk_i$, this deterministic algorithm outputs a signature share $\sigma_i$.

    \item \textit{Share verification algorithm: VerifyShare$_i(m, (i, \sigma_i)) \rightarrow 0/1$}. This algorithm takes three parameters as input: a message $m$, a signature share $\sigma_i$, and the index $i$. It outputs $1$ or $0$ based on the validity of the signature share $\sigma_i$ (whether $\sigma_i$ was generated by $p_i$ or not). The correctness property of the signing and verification algorithms requires that for a message $m$ and party index $i$, $\Pr[VerifyShare_i(m, (i, SigShare_i(sk_i, m))) = 1] = 1$.

    \item \textit{Share combining algorithm: CombineShare$_i(m, \{(i, \sigma_i)\}_{i \in S}) \rightarrow \sigma / \perp$}. This algorithm takes two inputs: a message $m$ and a list of pairs $\{(i, \sigma_i)\}_{i \in S}$, where $S \subseteq [n]$ and $|S| = f+1$. It outputs either a signature $\sigma$ for the message $m$ or $\perp$ if the list contains any invalid signature share $(i, \sigma_i)$.

    \item \textit{Signature verification algorithm: Verify$_i(m, \sigma) \rightarrow 0/1$}. This algorithm takes two parameters: a message $m$ and a signature $\sigma$, and outputs a bit $b \in \{0, 1\}$ based on the validity of the signature $\sigma$. The correctness property of the combining and verification algorithms requires that for a message $m$, $S \subseteq [n]$, and $|S| = f+1$, $\Pr[\text{Verify}_i(m, \text{Combine}_i(m, \{(i, \sigma_i)\}_{i \in S})) = 1 \mid \forall i \in S, \text{VerifyShare}_i(m, (i, \sigma_i)) = 1] = 1$.

\end{itemize}

\subsection{Threshold Coin-Tossing} \label{TCT}
We utilize the threshold coin-tossing scheme from \cite{THRESH01, SECURE02}. The security properties and the algorithm definitions we use here are adopted from \cite{PMVBA}. We assume a trusted third party has an unpredictable pseudo-random generator (PRG) $G : R \rightarrow \{1, \ldots, n\}^s$, known only to the dealer. The generator takes a string $r \in R$ as input and returns a set $\{S_1, S_2, \ldots, S_s\}$ of size $s$, where $1 \leq S_i \leq n$. Here, $\{r_1, r_2, \ldots, r_n\} \in R$ are shares of a pseudorandom function $F$ that maps the coin name $C$. The threshold coin-tossing scheme satisfies the following security properties, except with negligible probabilities:

\begin{itemize}
    \item \textbf{Pseudorandomness.} The probability that an adversary can predict the output of the $F(C)$ is $\frac{1}{2}$. The adversary interacts with the honest parties to collect \textit{coin-shares} and waits for $t$ \textit{coin-shares}, but to reveal the coin $C$ and the bit $b$, the adversary requires at least $\langle t-f\rangle$ \textit{coin-shares} from the honest parties. If the adversary predicts a bit $b$, then the probability is $\frac{1}{2}$ that $F(C) = b$ ($F(C) \in \{0, 1\}$). Although the description is for single-bit outputs, it can be trivially modified to generate $k$-bit strings by using a $k$-bit hash function to compute the final value.
    \item \textbf{Robustness.} It is computationally \textit{infeasible} for an adversary to produce a coin $C$ and $t$ valid \textit{coin-shares} of $C$ such that the share-combine function does not output $F(C)$.
\end{itemize}

The dealer provides a private function $CShare_i$ to every party $p_i$, and two public functions: $CShareVerify$ and $CToss$. The private function $CShare_i$ generates a share $\sigma_i$ for the party $p_i$. The public function $CShareVerify$ can verify the share. The $CToss$ function returns a unique and pseudorandom set given $f+1$ validated coin shares. The following properties are satisfied except with negligible probability:

\begin{itemize}
    \item For each party $i \in \{1, \ldots, n\}$ and for every string $r_i$, $CShareVerify(r_i, i, \sigma_i) = \text{true}$ if and only if $\sigma_i = CShare_i(r_i)$.
    \item If $p_i$ is honest, then it is impossible for the adversary to compute $CShare_i(r)$.
    \item For every string $r_i$, $CToss(r, \Sigma)$ returns a set if and only if $|\Sigma| \geq f+1$ and each $\sigma \in \Sigma$ and $CShareVerify(r, i, \sigma) = \text{true}$.
\end{itemize}

\section{Agreement protocol}

%\begin{figure}[h]
 %   \centering
%  \includegraphics[width=0.858\textwidth,height=0.413\textwidth]{Cachin-MVBA.png}
%  \caption{Cacin-MVBA. After completing the VCBC broadcast, parties propose their proposals with verifiable proof. Upon receiving $n-f$ proposals, parties commit a $n-size$ array of the received proposals. A party starts the ABBA protocol after receiving $\langle 2f+1 \rangle$ commit message. }
%   \label{fig:Cachin-MVBA.png}
%\end{figure}

\subsection{\textbf{Asynchronous Binary Byzantine Agreement (ABBA)}} \label{appendix:ABBA}
The ABBA protocol allows parties to agree on a single bit $b \in \{0, 1\}$ \cite{BYZ10, SECURE05, SIG01}. We have adopted the ABBA protocol from \cite{SECURE02}, as given in Algorithm \ref{algo:ABBA}. The expected running time of the protocol is $O(1)$, and it completes within $O(k)$ rounds with probability $1 - 2^{-k}$. Since the protocol uses a common coin, the total communication complexity becomes $O(kn^2)$. For more information on how to realize a common coin from a threshold signature scheme, we refer interested readers to the \cite{HONEYBADGER01}.

%\nasit{The guarantee of bias towards 1 seems to be really important for the main algorithm. It is not clear to me how easy it is to remove it.}

\paragraph*{Construction of the ABBA biased towards 1} 
We use the ABBA protocol from \cite{SECURE02}. We optimize and changed the protocol for biased towards $1$. The biases towards $1$ property ensures that if at least one party input $1$ in the pre-process step. The pseudocode of the ABBA protocol biased towards 1 is given in Algorithm \ref{algo:ABBA}, and a step-by-step description is provided below:

\begin{itemize}
    \item \textbf{Pre-process step} . Generate an $\sigma_0$ share on the message and multi-cast the pre-process type message.
    \item Collect $2f+1$ proper pre-processing messages. (see (Algorithm \ref{algo:ABBA})).

    \item \textbf{Repeat loop:} Repeat the following steps 1-4 for rounds round = 1,2,3,...
    \begin{itemize}
        \item Pre-Vote step. (see Algorithm \ref{algo:ABBA-PreVote})
        \begin{itemize}
            \item If round = 1, $b=1$ if there is a pre-processing vote for $1$ (biased towards 1, taking one vote instead of majority) else  $b=0$. (see lines 3-4).
            \item If round $>$ $1$, if there is a threshold signature on main-vote message from round-1 then decide and return. (see lines 18-20)
            \item Upon receiving main-vote for $0/1$, update $b$ and the justification. (see lines 12-17) 
           % \item $b=1$, if there is a main-vote for $1$.
            \item $b= F(ID, r-1)$, all the main-vote are abstain and the justification is the threshold signature of the abstain vote. (see lines 6-7)
            \item Produce signature-share on the message (ID, pre-vote, round, b) and multicast the message of the form pre-vote,round,b,justification, signature-share). (lines 9-11)
        \end{itemize}
        \item Main-vote step. (See Algorithm \ref{algo:ABBA-MainVote})
        \begin{itemize}
            \item Collect (2f+1) properly justified round pre-vote messages. (lines 14-19)
            \item If there are (2f+1) pre-votes for 0/1, $v=0/1$ and the justification is the threshold-signature of the the sign-shares on pre-vote messages. (lines 5-7)
            
            \item If there are (2f+1) pre-votes for both $0$ and $1$, $v=abstain$ and the justification is the two sign-shares from pre-vote 0 and pre-vote 1. (lines 9-10)
            \item Produce signature-share on the message (ID, pre-vote, round, v) and multi-cast the message of the form (main-vote,round,v,justification, signature-share) (lines 11-13)
        \end{itemize}
        \item  Check for decision. (See Algorithm \ref{algo:ABBA-CheckForDecision})
        \begin{itemize}
            \item Collect (2f+1) properly justified main-votes of the round $round$. (line 3)
            \item If these is no abstain vote, all main-votes for $b\in \{0,1\}$, then decide the value $b$. Produce a threshold signature on the main votes' sign-shares and multi-cast the threshold signatures to all parties and return.  (lines 4-7)
            \item Otherwise, go to Algorithm \ref{algo:ABBA-CheckForDecision}. line (11)
        \end{itemize}
        \item Common Coin. (See Algorithm \ref{algo:ABBA-CommonCoin})
        \begin{itemize}
            \item Generate a coin-share of the coin (ID, round) and send to all parties a message of the form (coin, round,coin-share). (lines 1-4) 
            \item Collect (2f+1) shares of the coin (ID,round $\sigma_k$), and combine these shares to get the value $F(ID, round) \in \{0,1\}$. (lines 5-6)
        \end{itemize}
        
    \end{itemize}
\end{itemize}

\begin{algorithm}[ht!]
\DontPrintSemicolon
\SetAlgoNoEnd
\SetAlgoNoLine
\SetKwProg{un}{upon receiving}{ do}{}

\SetKwProg{ABBA}{upon}{ do}{}

\ABBA{ABBA(m)}{
\tcc{Preprocess Step.}
 $\sigma_0 \leftarrow SigShare_i (sk_i,m_i)$  see \cite{SECURE02}\;
 
 \textbf{multi-cast} $( pre-process, m_i, \sigma_0)$\;
 
 wait until at least $(n-f)$ pre-process messages have been received.\;

\For{$round = 1,2,3,...$}
{
  \textbf{Prevote Step :} Algorithm \ref{algo:ABBA-PreVote}\;

  \textbf{Main-vote Step:} Algorithm \ref{algo:ABBA-MainVote}\;

  \textbf{Check For Decision :} Algorithm \ref{algo:ABBA-CheckForDecision} \;

  \textbf{Common Coin:} Algorithm \ref{algo:ABBA-CommonCoin}\;

}
}
\caption{ABBA biased towards 1: protocol for party $p_i$ }
\label{algo:ABBA}
\end{algorithm}

\begin{algorithm}[ht!]
\DontPrintSemicolon
\SetAlgoNoEnd
\SetAlgoNoLine
\SetKwProg{un}{upon}{ do}{}
$b \leftarrow \perp$\;
$justification \leftarrow  \perp$\;
\uIf{round = 1}{
  $b \leftarrow 1$ if there is any pre-process message with $m=1$ (biased towards $1$), otherwise 0.
}\uElse{
     $b = F(ID, round-1)$\;
    $justification \leftarrow threshold-signature \langle ID, main-vote, round - 1,abstain\rangle$\;
    
    \textbf{ wait for} $n-f$ justified main-vote \;
     
        % $b \leftarrow abstain$\;

}

$m_i \leftarrow (ID, pre-vote, round, b)$\;
$\sigma \leftarrow SigShare_i (sk_i,m_i)$  \;
 
\textbf{multi-cast} $( pre-vote, r, b, justification, \sigma)$\;

 \un{receiving $\langle  main-vote,round, v, justification, \sigma  \rangle$ for the first time from party $p_k$} {
  \uIf{v = 0}{
     b = 0\;
  }\uElseIf{v = 1}{
     b = 1\;
  }

   $justification \leftarrow threshold-signature \langle ID, pre-vote, round - 1,b\rangle$
  %\textbf{multi-cast}(b, threshold-signature)\;
  %\textbf{return}(b, threshold-signature)\;  
 
 }

 \un{receiving $\langle  b, threshold-signature  \rangle$ for the first time from party $p_k$} {
  \textbf{multi-cast}(b, threshold-signature)\;
  \textbf{return}(b, threshold-signature)\;  
 
 }

\caption{ABBA biased towards 1: Pre-vote step}
\label{algo:ABBA-PreVote}
\end{algorithm}

\begin{algorithm}[ht!]

\DontPrintSemicolon
\SetAlgoNoEnd
\SetAlgoNoLine
\SetKwProg{un}{upon}{ do}{}

\tcc{Mainvote Step.}

%\un{ CheckMainVote(round) invocation}{
   $\Sigma$ = \{\}\;
   $PV_0$ = \{\}\;
   $PV_1$ = \{\}\;
   \textbf{wait until} $|\Sigma|$ = 2f+1\;
  % wait for $(n-f)$ justified prevote(PV) \;

   \uIf{ $|PV_0| = 2f+1$ or $|PV_1| = 2f+1$}{
   $v \leftarrow 0/1$\;
   $justification \leftarrow CombineShare_i(v,{i, \sigma_i}_{i \in \Sigma})$\;
   }\uElse{
      $v \leftarrow abstain$\;
       $justification \leftarrow (\sigma_i \in PV_0, \sigma_j \in PV_1)$\;
   }

   $m_i \leftarrow (ID, main-vote, round, v)$\;
$\sigma \leftarrow SigShare_i (sk_i,m_i)$  \;
 
 \textbf{multi-cast} $( main-vote, round, v, justification, \sigma)$\;
% CheckForDecision()\;
    
% }

 \un{receiving $\langle  pre-vote, round, b, justification, \sigma  \rangle$ for the first time from party $p_k$} {
\uIf{b = 0}{
   $PV_0 \leftarrow PV_0 +  1 $\;
}\uElseIf{b = 1}{
   $PV_1 \leftarrow PV_1 +  1 $\;
}
 $\Sigma \leftarrow \Sigma +  \sigma $\;
}

 %\Big( arg max\big\{  messages\{0,1\} \in pre-process $ $ messages\big\} \Big)$\;
\caption{ABBA biased towards 1: Main-Vote step }
\label{algo:ABBA-MainVote}
\end{algorithm}

\begin{algorithm}[H]

\DontPrintSemicolon
\SetAlgoNoEnd
\SetAlgoNoLine
\SetKwProg{un}{upon}{ do}{}

%\tcc{Check for decision}

  $\Sigma$ = \{\}\;
  $isAbstain = no$\;
  \textbf{wait until} $|\Sigma|$ = 2f+1\;
  \uIf{isAbstain = no}{
   % $b$ = main-vote(v)\;
   % decide $b$\;
    threshold-signature = $CombineShare_{i}(b, {(i,\sigma_i)}_{i\in \Sigma})$ \;
   \textbf{ multi-cast}(threshold-signature)\;
    return (b, threshold-signature)\;
  }\uElse{
    go to  Algorithm \ref{algo:ABBA-CommonCoin}\;
  }

\un{receiving $\langle main-vote, r, v, justification, \sigma \rangle$ for the first time from party $p_k$} {
\uIf{v = abstain}{
   $isAbstain = yes$\;
}
 $b = v$\;
 $\Sigma \leftarrow \Sigma +  \sigma $\;
}

\caption{ABBA biased towards 1: Check for Decision for party $p_i$ }
\label{algo:ABBA-CheckForDecision}
\end{algorithm}

\begin{algorithm}[H]
\DontPrintSemicolon
\SetAlgoNoEnd
\SetAlgoNoLine
\SetKwProg{un}{upon}{ do}{}

  $\Sigma$ = \{\}\;
  $\sigma_i \leftarrow CShare(r_i)$\;
 \textbf{ multi-cast} $\langle coin, round, \sigma_i\rangle$\;
  \textbf{wait until} $|\Sigma|$ = 2f+1\;
 % wait for $(n-f)$ coin-shares\;
  $F(ID, round) \in \{0,1\} \leftarrow CToss(r, \Sigma)$\;

\un{receiving $\langle coin, round, \sigma_k \rangle$ for the first time from party $p_k$} {

 $\Sigma \leftarrow \Sigma + \sigma_k $\;
}

 %\Big( arg max\big\{  messages\{0,1\} \in pre-process $ $ messages\big\} \Big)$\;
\caption{ABBA biased towards 1: Common Coin for party $p_i$}
\label{algo:ABBA-CommonCoin}
\end{algorithm}

\section{Miscellaneous}\label{secA1}

\subsection{Atomic broadcast from ACS.} \label{ABC :ACS}
HB-BFT \cite{HONEYBADGER01} protocol achieves atomic broadcast using the ACS protocol and the threshold encryption scheme. In this protocol, every party proposes its transactions, and at the end of the protocol, parties reach an agreement on at least $f+1$ honest parties' proposals. So, parties choose their transaction list randomly, which helps to have varying proposals from multiple parties. However, an adversary can censor the transactions and delay a particular transaction from getting accepted in the log. To prevent this, parties use threshold encryption and decryption techniques that helps to hide any transactions until the parties reach an agreement. We also follow the same threshold encryption scheme to avoid censorship resilience.

\subsection{Differed Proof} \label{Lemma2 proof}

The proof is adopted from \cite{PMVBA}.
\begin{proof} Since the selected parties can be byzantine and the adversary can schedule the message delivery to delay the agreement, we have considered the scenarios below. 
\begin{enumerate}
    \item Among $\langle f+1 \rangle$ selected parties, $f$  parties are non-responsive.
    \item Selected $\langle f+1 \rangle$ parties are responsive, but other $f$ non-selected parties are non-responsive.
    \item Every party is responsive, including the selected $\langle f+1 \rangle$ parties.
    \item Selected $t \leq \langle f+1 \rangle$ parties are responsive, and total $m$ parties are responsive, where $\langle 2f+1 \rangle \leq m \leq n $.
    
\end{enumerate}

We will first prove that the first three scenarios are a special case of scenario four.

\begin{enumerate}
    \item For case 1, $t=1$ and $m = t+2f = 2f+1$. So it is the same as case 4.
    \item For case 2, $t=f+1$ and $m=t+f=2f+1 < 3f+1$. So, it is the same as case 4.
    \item For case 3, $t=f+1$ and $m=t+2f=3f+1=n$. So, it is the same as case 4.
\end{enumerate}

We prove that in every scenario, at least one party's proposal reaches $ 2f+1 $ parties. Since we have proved that case (1), (2) and (3) are the special case of case (4), proving for these case is enough (proving for case (4) covers all case).

\begin{enumerate}

    \item  Since among $ f+1 $ selected parties, $f$ parties are non-responsive, only one party completes the $pPB$ protocol and proposes the provable-broadcast proof. If any party receives a provable-broadcast proof, then the provable-broadcast proof is from the responsive selected party. Since every party receives the provable-broadcast proof for the same party's proposal, the proposal reaches at least $ 2f+1 $ parties.
    
    \item The $ f+1 $ selected parties are responsive and complete the $pPB$ protocol. Each selected party broadcasts the provable-broadcast proof, and $ 2f+1 $ parties receive the provable-broadcast proof ( another $f$ number of parties are non-responsive). Any party receives a provable-broadcast proof, suggests the received provable-broadcast proof, and waits for $ 2f+1 $ suggestions. If a party receives $ 2f + 1 $ suggestions, then these suggestions include all $ f+1 $ parties' provable-broadcast proofs because among the $ 2f + 1 $ received suggestions, $ f+1 $ number of suggestions are from the selected parties. So, every proposal reaches $ 2f + 1 $ parties.

    \item The proof is by contradiction. Let no proposal reach more than $2f$ parties. Since we assume every party is responsive, every party receives a proposal in the $propose$ step. There must be a $( 3f+1)*( 2f+1)$ suggestion messages. If no proposal can be suggested to more than $2f$ parties, the total number of suggestions is $(3f+1) * 2f <  (2f+1)* (2f+1)$ (Though a proposal can be suggested by more than one party we assume that every party suggests to the same $2f$ parties otherwise it would fulfill the requirement of $ 2f+1$ proposals). However, honest parties must send enough suggestion messages to ensure the protocol's progress, and the adversary eventually delivers the messages. Therefore, at least one party's proposal reaches $2f+1$ parties, a contradiction.

    \item The proof is by contradiction. Let no proposal reaches to more than $2f$ parties. If $1 \leq t \leq  f+1 $ parties distribute their verifiable proof to $ 2f+1 \leq m <  3f+1 $ parties and no proposal reaches more than $2f$ parties, then there must be no more than $m * 2f$ suggestions. However, $m$ parties must receive $m$* $ 2f+1$ suggestions greater than $m*2f$, a contradiction.
     
\end{enumerate}

\end{proof}

The fourth proof assures that after the $suggest$ step, one or more proposals and provable broadcasts are common to $2f+1$ parties.

%%=============================================%%
%% For submissions to Nature Portfolio Journals %%
%% please use the heading ``Extended Data''.   %%
%%=============================================%%

%%=============================================================%%
%% Sample for another appendix section			       %%
%%=============================================================%%

%% \section{Example of another appendix section}\label{secA2}%
%% Appendices may be used for helpful, supporting or essential material that would otherwise 
%% clutter, break up or be distracting to the text. Appendices can consist of sections, figures, 
%% tables and equations etc.

\subsection{Differed Figures}\label{App:Motivation Figure}

\begin{figure*}[ht]
     \centering
     \begin{subfigure}[b]{0.47\textwidth}
         \centering
         \includegraphics[width=\textwidth]{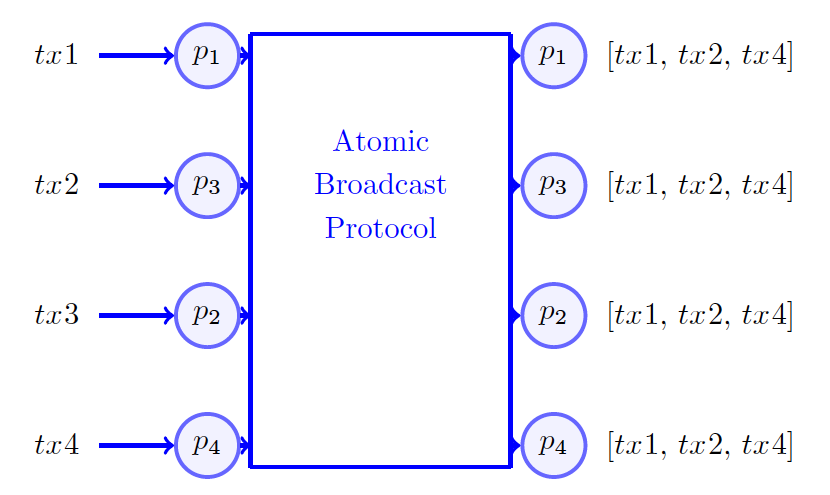}
         \caption{An ideal scenario where each party has unique transactions, and the parties reach an agreement on (n-f) parties' transactions and throughput is good. }
         \label{fig:Motivationfig1}
     \end{subfigure}
     \hfill
     \begin{subfigure}[b]{0.47\textwidth}
         \centering
         \includegraphics[width=\textwidth]{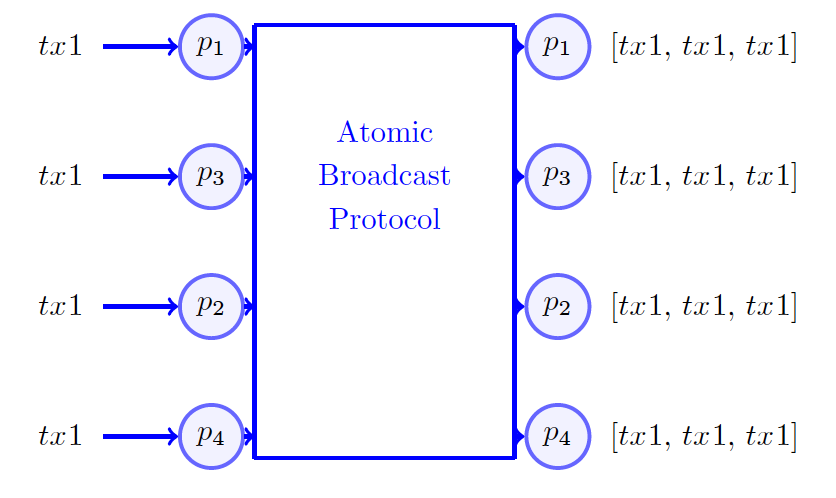}
         \caption{A scenario where parties have same transactions, therefore, though the parties agree on (n-f) parties' transactions, the throughput is not good.  }
         \label{fig:Motivationfig2}
     \end{subfigure} 
     \label{fig:Motivation1}
\end{figure*} 

\begin{figure*}[ht]
 \centering
    \begin{subfigure}[b]{0.47\textwidth}
         \centering
         \includegraphics[width=\textwidth]{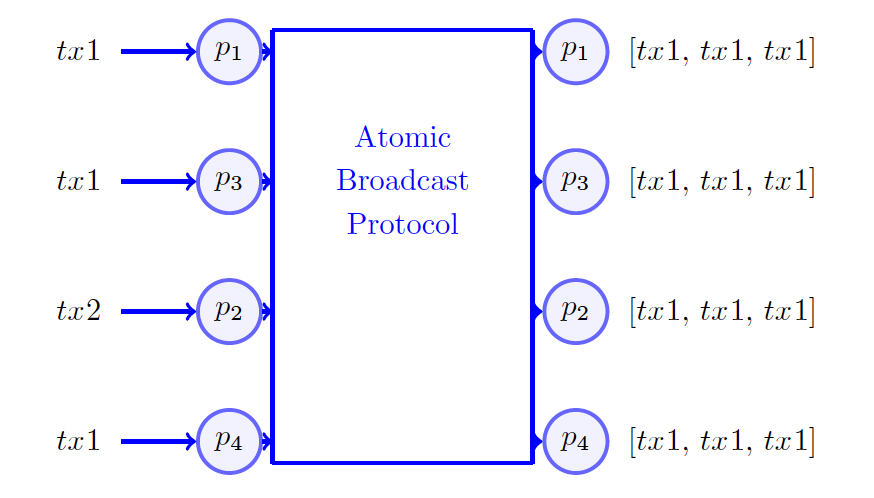}
         \caption{A scenario where (n-f) parties have same tranactions; therefore, though the parties agree on (n-f) parties' transactions, the throughput is not good.}
         \label{fig:Motivationfig3}
     \end{subfigure}
     \hfill
     \begin{subfigure}[b]{0.47\textwidth}
         \centering
         \includegraphics[width=\textwidth]{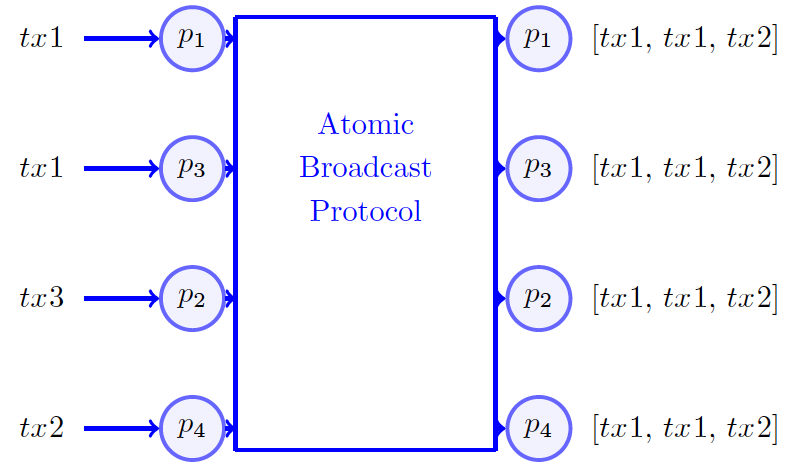}
         \caption{A scenario where there is a difference in the transactions among half of the parties; but there is duplication in agreed transactions. }
         \label{fig:Motivationfig4}
     \end{subfigure}
     \label{fig:Motivation2}
\end{figure*}

\begin{figure*}[ht]
     \centering
     \begin{subfigure}[b]{0.33\textwidth}
         \centering
         \includegraphics[width=\textwidth]{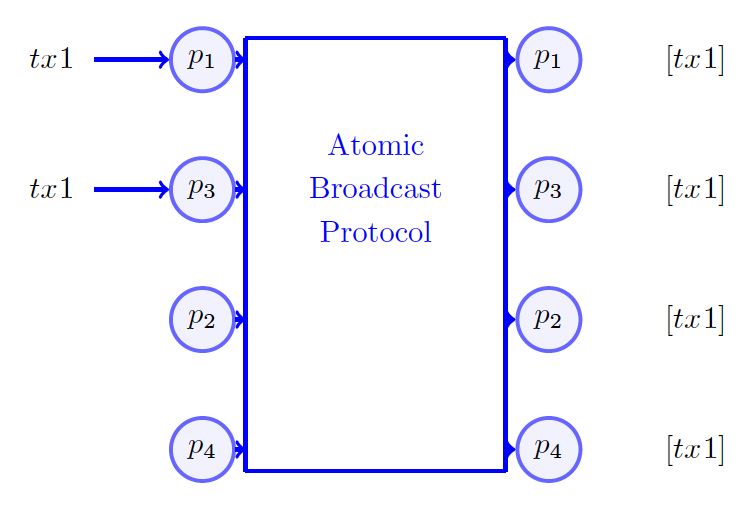}
         \caption{ A scenario where there is no difference in the transactions among the parties, therefore, though the parties agree on (f+1) parties' transactions, the throughput is not good, but the communication complexity is low.}
         \label{fig:Result1}
     \end{subfigure}
     \hfill
     \begin{subfigure}[b]{0.3\textwidth}
         \centering
         \includegraphics[width=\textwidth]{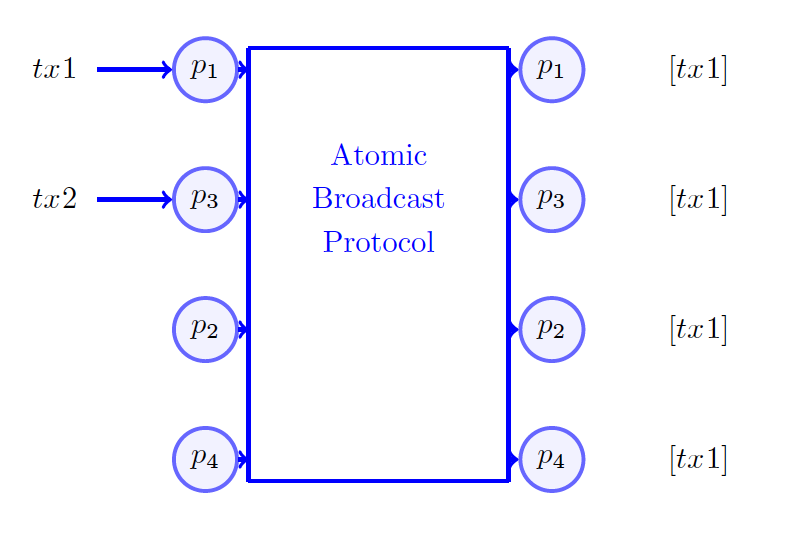}
         \caption{ A scenario where there is a difference in the transactions among the selected parties, but the parties reach an agreement on one party's requests. It is good because the complexity of communication is good.}
         \label{fig:Result2}
     \end{subfigure} 
      \hfill
     \begin{subfigure}[b]{0.32\textwidth}
         \centering
         \includegraphics[width=\textwidth]{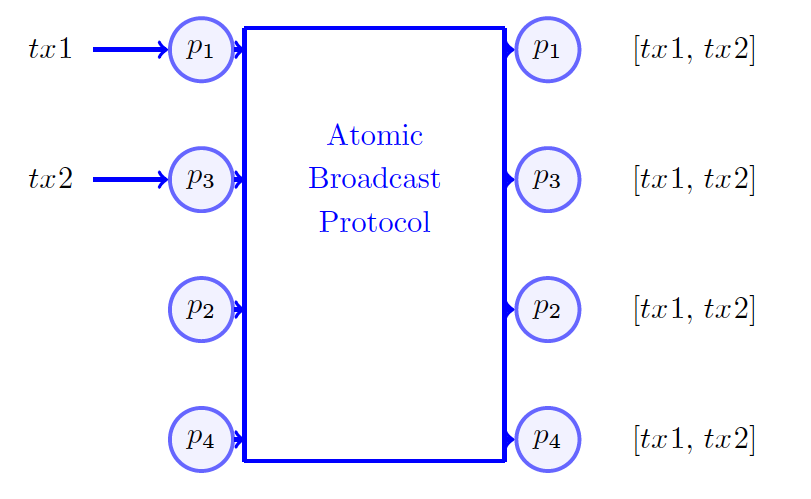}
         \caption{ A scenario where there are differences in the transactions among the parties. Therefore, though the parties agree on (f+1) parties' transactions, the throughput is good in low communication cost.}
         \label{fig:Result3}
     \end{subfigure} 
\end{figure*} 

\section{Related work on asynchronous and partially synchronous model.} \label{Related :Work}
\paragraph{Asynchronous settings} A protocol can reach a consensus on the BA problem if there are a total of $n$ parties, among them $f$ are faulty, and $n$ is at least greater than $3f$ \cite{BYZ10}. So any solution for the byzantine agreement problem has optimal resilience if it satisfies the following constraint, $n=3f+1$. Fischer, Lynch, and Paterson\cite{CONS03} gave a theorem that proved that byzantine agreement protocol does not have a termination property in asynchronous settings even if there is only one non-byzantine failure. Then Ben-or \cite{BYZ11} proved that in such situations, if we take the help of randomness, these protocols can terminate with a probability of almost $1$. The classic work of Cachin et al. \cite{SECURE02} presented asynchronous binary agreement (ABA), which is the building block of the MVBA protocol \cite{SECURE02}. MVBA allows every party to provide an input, and the protocol outputs one of the inputs. These inputs are externally valid by a predicate defined by the protocol. The protocol uses the threshold-signature scheme and coin-tossing 
 scheme \cite{THRESH01,BORN01} to realize the security and the randomness which is also used by the fault-tolerant protocols \cite{CACHIN01,VICTOR01,CACHIN02,BYZ20}.  Message complexity of the MVBA protocol  is $O(n^3)$, and it maintains optimal resilience. Recent work of Abraham et al. \cite{BYZ17} reduces the message complexity from $O(n^3)$ to $O(n^2)$ where the probability of a protocol terminates with a completed broadcast is $2/3$. Dumbo-MVBA \cite{BYZ20} also uses the MVBA as a base and achieves $O(n^2)$ message complexity but uses erasure code to minimize the message complexity.\;

\paragraph{Partially synchronous model} The partially synchronous communication model was introduced by Dwork, Lynch, and  Stockmeyer \cite{BYZ05}. This model assumes a known time bound $\Delta$ for message delay; that is, honest parties deliver their messages in this time bound after a \textit{global stabilization time (GST)}. After GST, a protocol advances deterministically \cite{CONS03}. \;

Castro et al. \cite{BYZ08} provided the first byzantine fault-tolerance protocol that assumes a partially synchronous model. The core of the protocol is a leader who receives requests from the clients, assigns orders on the received requests, and drives the other parties to reach a consensus. If a leader fails to deliver the result in $\Delta$ time-bound, then the parties start the leader election to elect a new leader. An adversary with the help of the byzantine parties can exploit this $\Delta$ parameter to drive the parties to find a new leader and makes the leader election process infinite \cite{HONEYBADGER01}. Many other protocols are proposed in the literature \cite{BYZ24,BYZ25,BYZ26,BYZ27,BYZ28,BYZ29,Hotstuff01} face the same challenges.

\end{document}